\definecolor{myred}{rgb}{0,0,0}\def\myred#1{{\textcolor{myred}{#1}}}
\definecolor{myblue}{rgb}{0,0.32,0.7}
\tikzset{
    mynode/.style={rectangle,rounded corners,draw=black, 
      thick, inner sep=.5em, minimum size=2em, text centered},
    myarrow/.style={->, >=latex', shorten >=1pt, thick},
    }
\newtheorem{theorem}{Theorem}
\newtheorem{remark}{Remark}
\newtheorem{proposition}{Proposition}
\def\BibTeX{{\rm B\kern-.05em{\sc i\kern-.025em b}\kern-.08em
    T\kern-.1667em\lower.7ex\hbox{E}\kern-.125emX}}
\begin{document}

\title{Leader-Follower Formation and Tracking Control of Underactuated Surface Vessels}
\author{Bo Wang and Antonio Lor{\'\i}a
\thanks{This work was supported in part by GSoE at CCNY, and in part by the PSC-CUNY Award, jointly funded by
The Professional Staff Congress and The City University of New York. 
\textit{(Corresponding author: Bo Wang.)}}
\thanks{Bo Wang is with the Department of Mechanical Engineering, The City College of New York, New York, NY 10031 USA (e-mail: bwang1@ccny.cuny.edu).}
\thanks{Antonio Lor{\'\i}a is with the Laboratoire des signaux et syst{\`e}mes (L2S), CNRS, 91190 Gif-sur-Yvette, France (e-mail: antonio.loria@cnrs.fr).}
}

\maketitle

\begin{abstract}
This Technical Note presents a simple control approach for global trajectory tracking and formation control of underactuated surface vessels equipped with only two propellers. The control approach exploits the inherent cascaded structure of the vehicle dynamics and is divided into control designs at the kinematics and kinetics levels. A controller with a low-gain feature is designed at the kinematics level by incorporating the cascaded system method, persistency of excitation, and the small-gain theorem. Furthermore, a PD+ controller is designed to achieve the velocity tracking at the kinetics level. The proposed control laws are partially linear and saturated linear and easy to implement. Based on a leader-follower scheme, our control approach applies to the formation tracking control problem of multi-vehicle systems under a directed spanning tree topology. \color{myred}Our main results guarantee uniform global asymptotic stability for the closed-loop system, which implies robustness with respect to bounded disturbances in the sense of Malkin’s total stability, also known as local input-to-state stability.
\color{black}
\end{abstract}

\begin{IEEEkeywords}
Formation control, autonomous vehicles, underactuated systems, multi-agent systems.
\end{IEEEkeywords}

\section{Introduction}\label{sec:introduction}

\IEEEPARstart{M}{otion} control of underactuated surface vessels has received much consideration over the last two decades due to its intrinsic nonlinear properties and practical applications in rescue, search, exploration, and reconnaissance missions. \myred{Unlike fully-actuated systems, underactuated systems cannot be commanded to follow arbitrary trajectories.} Furthermore, due to the underactuation, 
a surface vessel equipped with only two propellers cannot be stabilized by continuous time-invariant feedback \cite{pettersen1996exponential}.  Consequently, trajectory tracking and set-point stabilization are usually studied as two separate problems in the context of underactuated surface vessels. 

Trajectory tracking control of underactuated surface vessels has been thoroughly studied since the late 1990s\,---\,see, e.g., \cite{pettersen2001underactuated,pettersen1998tracking,jiang2002global,do2002underactuated,lefeber2003tracking,lefeber2000tracking}. Early results primarily focus on local tracking \cite{pettersen2001underactuated} and semi-global tracking \cite{pettersen1998tracking}. The first global tracking result utilizing the cascaded system approach for nonholonomic systems is reported in \cite{panteley1998exponential}. Later, using a similar cascaded system approach, the global tracking problem for underactuated surface vessels was solved in \cite{lefeber2000tracking}. Subsequently, several nonlinear control methods have been proposed in the literature based on the backstepping technique; for instance, in \cite{jiang2002global,do2002underactuated,ghommam2010global}. {\color{myred}One characteristic of the controllers obtained through backstepping is that they are highly complex, making implementation and tuning challenging.} Several control methods based on the sliding mode technique have been presented to solve the robust tracking control problem for underactuated surface vessels, for example, \cite{yu2012sliding} and \cite{ashrafiuon2016trajectory}. Using the cascaded system approach, the global path-following problem for underactuated surface vessels was solved in \cite{belleter2019observer}.

Leader–follower formation tracking control is a natural extension of the classical trajectory tracking control problem for multi-agent systems. The formation tracking control involves coordinating all the agents to achieve a predefined geometric configuration through local interactions and to follow a designated swarm leader. {\color{myred}However, one essential difference between trajectory tracking and formation tracking lies in the fact that formation tracking is inherently \textit{distributed}, necessitating that each agent relies on \textit{relative} position measurements. In other words, agents lack GPS measurements, rendering control laws based on absolute positions unfeasible \cite{oh2015survey}.} The leader-follower formation tracking control problem for underactuated surface vessels has been studied, e.g., in \cite{jin2016fault,ghommam2017adaptive,lu2020adaptive,wang2022robust,10076261}. In \cite{jin2016fault}, a barrier Lyapunov function-based formation control scheme is proposed to deal with line-of-sight range and bearing constraints for surface vessels. Using the adaptive backstepping method, formation tracking control with asymmetric range and bearing constraints is considered in \cite{ghommam2017adaptive}. An output feedback formation controller has been proposed in \cite{lu2020adaptive} to solve the formation tracking problem, ensuring error convergence in a practical sense. In \cite{wang2022robust}, a robust formation control and obstacle avoidance scheme for underactuated surface vessels is proposed for underactuated surface vessels using the super-twisting control technique. \color{myred}In \cite{10076261} bearing measurements are used in a controller that guarantees robustness with respect to bounded disturbances.\color{black}

In this Technical Note, we address the problem of leader-follower formation tracking control of underactuated surface vessels. That is, each surface vessel follows one leader, and only one swarm leader vessel has the information of the reference trajectory. The control design is based on the cascaded system approach, passivity-based control, small-gain theorem, and the construction of ISS Lyapunov functions.

\color{myred}      Our controller benefits from the inherent cascaded structure of the vessel dynamics, resulting in a very natural and simple control structure---the control laws at the kinematics level are \textit{linear} and \textit{saturated linear}, and the control laws at the kinetics level are of the proportional-derivative plus dynamic compensation (PD+) type \cite{paden1988globally}.

Our main results guarantee uniform global asymptotic stability (UGAS) of the origin for the closed-loop system, as per \cite[Def. 4.4]{Khalil2002}. The importance of this property cannot be overestimated. In the context of tracking control problems, in which case the closed-loop system is non-autonomous, only this type of stability ensures robustness with respect to perturbations in the sense of Malkin's total stability \cite[Chap. 2, Sect. 4]{ROUHABLAL}, which is  also known as local input-to-state stability. Essentially, UGAS implies the existence of a converse strict Lyapunov function that may be used to establish the robustness property. For instance, a proof may be inferred from \cite[Def. 4.4, Lemma 4.5, Thm. 4.9, Thm. 4.14]{Khalil2002} and \cite[Chap. II, Thm. 4.4]{ROUHABLAL}, but the fact is well-documented in the literature---see \cite{STAB-SURVEY-IFAC} for a detailed discussion and references and \cite{Teel20062219} for examples of non-uniformly globally asymptotically stable non-autonomous systems that may be destabilized by ``small'' bounded disturbances. 

\color{black}Our control design is based on the methodologies described in \cite{maghenem2017cascades} and \cite{maghenem2017global} for nonholonomic vehicles. To the best of the authors' knowledge, however, this marks the first application of a PD+ controller in formation control for \textit{underactuated} surface vessels. \color{myred}The problem of {\it compensating} for disturbances, {e.g.,} via the introduction of discontinuous control terms as in \cite{10076261}, remains out of scope of this Note.\color{black} 

The remainder of this Technical Note is organized as follows: In Section \ref{sec:main} we present problem formulation and our main results. Section \ref{sec:simulation} contains simulation results that illustrate the practical application of our theoretical findings. Finally, we wrap up the paper with concluding remarks in Section \ref{sec:conclusion}.

\section{Problem Formulation and Main Results}\label{sec:main}

For clarity of exposition, we present first a result on leader-follower tracking control ({\it i.e.,} limited to two surface vessels) and describe the control approach. Then, we present a result applicable to multiple leader-follower surface vessels operating over a directed spanning tree topology. 

\subsection{Cascades-Based Trajectory Tracking Control}

Let us consider the Lagrangian model of a surface vessel with only two propellers that provide the surge force and yaw torque, which is given by the equations---see, e.g., \cite{jiang2002global,do2002underactuated,lefeber2003tracking}, 
\begin{subequations}
  \label{eq:1}
  \begin{eqnarray}
     \label{eq:1a} & \dot{q}=J(q)v &\\
     \label{eq:1b} & M\dot{v}+C(v)v+Dv=G\tau. & 
  \end{eqnarray}
\end{subequations}
In the latter $q=[x~y~\theta]^\top$ is the configuration of the surface vessel containing the Cartesian coordinates $(x,y)$ and the orientation $\theta$ of the surface vessel in the fixed inertia frame; $v=[v_x~ v_y~ \omega]^\top$ is the generalized velocity vector consisting of the linear velocity $(v_x,v_y)$ and the angular velocity $\omega$ in the body-fixed frame; and $\tau=[\tau_x~\tau_\omega]^\top$ is the control input vector consisting of the surge force and the yaw torque. $J(\cdot)$ is the rotation matrix 
\begin{equation*}
    J(q)=\begin{bmatrix}
        \cos\theta & -\sin\theta & 0\\
        \sin\theta & \cos\theta & 0\\
        0 & 0 & 1
    \end{bmatrix};
\end{equation*}
$M=\operatorname{diag}\{m_{11},m_{22},m_{33}\}>0$ is the inertia matrix; $D=\operatorname{diag}\{d_{11},d_{22},d_{33}\}>0$ is the damping matrix. Finally, the Coriolis and centrifugal matrix $C(\cdot)$, and the input matrix $G$, are respectively given by
\[
    C(v)=\begin{bmatrix}
        0 & 0 & -m_{22}v_y\\
        0 & 0 & \phantom{-}m_{11}v_x\\
        m_{22}v_y & -m_{11}v_x & \phantom{-}0
    \end{bmatrix},
\quad     G=\begin{bmatrix}
        1 & 0 \\
        0 & 0 \\
        0 & 1
    \end{bmatrix}.
\]

\begin{subequations}\label{eq:5}
\myred{Since underactuated systems cannot be commanded to follow arbitrary trajectories, the trajectory tracking control problem consists in following a fictitious surface vessel with the same dynamical model}
  \begin{eqnarray}
     &\dot{q}_d=J(q_d)v_d &\label{eq:5a}\\
     &M\dot{v_d}+C(v_d)v_d+Dv_d=G\tau_d&\label{eq:5b}
  \end{eqnarray}
\end{subequations}
with bounded force reference $\tau_d=[\tau_{xd}~\tau_{\omega d}]^\top$ and coordinates $q_d=[x_d~y_d~\theta_d]^\top$ and $v_d=[v_{xd}~v_{yd}~\omega_d]^\top$. The trajectory tracking control objective is to steer the error $q-q_d$ to zero. Then, to address this problem as one of stabilization of an equilibrium, we define the body-fixed-frame position errors
\begin{align}\label{eq:error}
    \begin{bmatrix}
        e_x \\ e_y \\ e_\theta
    \end{bmatrix}
    =
    \begin{bmatrix}
        \phantom{-}\cos\theta\ & \sin\theta\ & 0 \\
        -\sin\theta\ & \cos\theta\ & 0\\
        \ 0 & 0 & 1
    \end{bmatrix}
    \begin{bmatrix}
        x-x_d \\ y-y_d \\ \theta-\theta_d
    \end{bmatrix}.
\end{align}
Hence, in the new coordinates, the error dynamics between the virtual reference vessel and the follower vessel become
\begin{subequations}
    \label{eq:error-kinematics}
  \begin{eqnarray}
    \label{eq:error-kinematics:a}
    \dot{e}_x&=&\omega e_y +v_x -v_{xd}\cos e_\theta - v_{yd}\sin e_\theta \\
    \label{eq:error-kinematics:b}
    \dot{e}_y&=&-\omega e_x +v_y +v_{xd}\sin e_\theta - v_{yd}\cos e_\theta \\
    \label{eq:error-kinematics:c}
    \dot{e}_\theta&=&\omega-\omega_d
  \end{eqnarray}
\end{subequations}
and the trajectory control problem is transformed into stabilizing the origin for the error dynamics (\ref{eq:error-kinematics}).

As in \cite{maghenem2017global}, we address this problem by separating the stabilization tasks at the kinematics and the kinetics levels. That is, we design the virtual control laws $v^*=[v_x^*~ v_y^*~\omega^*]^\top$ to stabilize the origin for the error kinematics (\ref{eq:error-kinematics}), while for the Lagrangian dynamics (\ref{eq:1b}) we design a PD+ controller to ensure that $v\to v^*$. 

\subsubsection{Underactuation constraints} Due to the underactuation, there is no external force that can directly control $v_y$. In other words, the virtual control $v_y^*$ cannot be arbitrarily assigned. On the other hand, it can be seen from the Lagrangian dynamics (\ref{eq:1b}) that $v_y$ depends on $v_x$ and $\omega$. Hence, given the virtual control laws $v_x^*$ and $\omega^*$, we define $v_y^*$ dynamically, {\it i.e.}, by solving the second equation in \eqref{eq:1b}. That is, we set
\begin{equation}\label{eq:constraint}
    \dot{v}_y^*=-\frac{m_{11}}{m_{22}}v_x^*\omega^*-\frac{d_{22}}{m_{22}}v_y^*, \quad v_y^*(0)=v_{yd}(0).
\end{equation}
\myred{In other words,  \eqref{eq:constraint} is an artificial constraint imposed by the system equation.} This acceleration constraint is a feasibility condition for the trajectory tracking control problem for underactuated surface vessels \cite{wang2022robust}. 

\subsubsection{Velocity tracking} Let $v_x^*$ and $\omega^*$ be given virtual control laws, and let $v_y^*$ be generated by (\ref{eq:constraint}). We design a controller such that the velocity error $\tilde{v}:=[\tilde{v}_x~\tilde{v}_y~\tilde{\omega}]^\top:=v-v^*\to 0$. \color{myred}To that end, first we recall that for \textit{fully-actuated} Lagrangian systems, this velocity tracking task can be easily achieved by a simple PD+ controller \cite{paden1988globally}:
\begin{equation}\label{eq:fully-actuated-control}
    u=M\dot{v}^*+C(v)v^*+Dv^*-K_d \tilde{v}.
\end{equation}
Indeed, the fully-actuated Lagrangian dynamics
\begin{equation}\label{eq:fully-actuated}
    M\dot{v}+C(v)v+Dv=u, \qquad u=[u_x~u_y~u_\omega]^\top \in\mathbb{R}^3,\quad \
\end{equation}
 in closed loop with the controller \eqref{eq:fully-actuated-control}, yields the closed-loop system
\begin{equation}\label{eq:fully-actuated-closed-loop}
    M\dot{\tilde{v}}+C(v)\tilde{v}+[D+K_d]\tilde{v}=0.
\end{equation}
Now, the total derivative of $V(\tilde{v}):=\frac{1}{2}\tilde{v}^\top M \tilde{v}$ along the trajectories of \eqref{eq:fully-actuated-closed-loop} yields
\begin{equation}
    \dot{V}_{(\ref{eq:fully-actuated-closed-loop})}=-\tilde{v}^\top [D+K_d]\tilde{v}<0
\end{equation}
for any $K_d=\operatorname{diag}\{k_{dx},k_{dy},k_{d\omega}\}\ge 0$ and, since $D=D^\top > 0$, the origin of the closed-loop system (\ref{eq:fully-actuated-closed-loop}) is uniformly globally exponentially stable (UGES) if $K_d\ge 0$. Note that we also have $\tilde{v}\in\mathcal{L}_2$. 

Thus, it is desirable to apply \eqref{eq:fully-actuated-control} to the system \eqref{eq:1b} by setting $G\tau = u$, but since \eqref{eq:fully-actuated-control} is underactuated, the second equation in $G\tau = u$ necessarily reads  
\[
    u_y = m_{22}\dot{v}_y^*+m_{11}v_x\omega^*+d_{22}v_y^* -k_{dy} \tilde v_y,
\]
and, in view of \eqref{eq:constraint}, we have 
\[
    u_y = m_{11} \tilde{v}_x\omega^* -k_{dy} \tilde v_y,
\]
which constitutes a ``perturbation'' on the closed-loop dynamics. Therefore, we set $k_{dy} = 0$. Thus, the control law \eqref{eq:fully-actuated-control}, originally designed for fully-actuated systems, may be implemented on \eqref{eq:1b} by defining 
\begin{equation}\label{eq:underactuated-control}
    \tau=G^\dagger(M\dot{v}^*+C(v)v^*+Dv^*-K_d \tilde{v}),
\end{equation}
where $G^\dagger$ corresponds to the left pseudoinverse of $G$, $v_y^*$ is generated by \eqref{eq:24},  $v_x^*$ and $\omega^*$ are arbitrary (smooth and bounded) and $k_{dy} = 0$. \color{black}In this case, the closed-loop system in the error coordinates is given by
\begin{equation}\label{eq:underactuated-closed-loop}
    M\dot{\tilde{v}}+C(v)\tilde{v}+[D+K_d]\tilde{v}=\begin{bmatrix}
        0\\ -m_{11} \tilde{v}_x\omega^*\\0
    \end{bmatrix}
\end{equation}
and $V(\tilde{v}):=\frac{1}{2}\tilde{v}^\top M \tilde{v}$ satisfies 
\begin{equation}
    \dot{V}_{(\ref{eq:underactuated-closed-loop})}=-\tilde{v}^\top [D+K_d]\tilde{v} -m_{11}\omega^*\tilde{v}_x\tilde{v}_y
\end{equation}
along trajectories of (\ref{eq:underactuated-closed-loop}). Then, if $\omega^*$ is bounded, {\it i.e.}, if  $|\omega^*| \le \omega_M$, then \myred{using Young's inequality on the cross-term} we obtain
\begin{equation}
    \dot{V}_{(\ref{eq:underactuated-closed-loop})}\le -\frac{1}{2} \tilde{v}^\top D \tilde{v}<0, \quad \forall \tilde v \neq 0,
\end{equation}
provided that $k_{dx}$ is selected as $k_{dx}\ge {m_{11}^2\omega_M^2}/{(2d_{22})}$. 

We draw the following statement from previous analysis.
\begin{proposition}[Velocity tracking]\label{prop:1}
    Consider the Lagrangian dynamics (\ref{eq:1b}) in closed loop with the control law (\ref{eq:underactuated-control}), where $t\mapsto v^*(t)$ is defined on $[t_\circ, \infty)$ for any $t_\circ \geq 0$ and satisfies  (\ref{eq:constraint}), and   $K_d :=\operatorname{diag}\{k_{dx},0,k_{d\omega}\}$, with  $k_{dx}\ge {m_{11}^2\omega_M^2}/{(2d_{22})}$ and  $k_{d\omega}>0$. Then, the origin of the closed-loop error system (\ref{eq:underactuated-closed-loop}) is globally exponentially stable, uniformly in the initial conditions $\tilde v(0)$ and $\omega^*$ satisfying  $|\omega^*|_\infty \le \omega_M$. 
\end{proposition}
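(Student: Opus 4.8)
The plan is to reuse the quadratic function $V(\tilde v)=\tfrac{1}{2}\tilde v^\top M\tilde v$ introduced just before \eqref{eq:underactuated-closed-loop} and to invoke the standard Lyapunov characterization of uniform global exponential stability, since the essential computation has already been carried out in the passage preceding the statement. First I would differentiate $V$ along the trajectories of \eqref{eq:underactuated-closed-loop}. The structural fact that makes this clean is that the Coriolis matrix $C(v)$ is skew-symmetric, so that $\tilde v^\top C(v)\tilde v=0$ and the Coriolis contribution drops out entirely. What remains is $\dot V=-\tilde v^\top[D+K_d]\tilde v-m_{11}\omega^*\tilde v_x\tilde v_y$, exactly as displayed.

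Second, I would handle the sign-indefinite cross-term $-m_{11}\omega^*\tilde v_x\tilde v_y$ by Young's inequality, splitting it into a weighted sum of $\tilde v_x^2$ and $\tilde v_y^2$. Choosing the weight so that the $\tilde v_y^2$ contribution equals $\tfrac{1}{2}d_{22}\tilde v_y^2$ (hence absorbed by half of the damping on $v_y$), the residual $\tilde v_x^2$ coefficient is at most $m_{11}^2\omega_M^2/(2d_{22})$, which is dominated by $k_{dx}$ precisely under the hypothesis $k_{dx}\ge m_{11}^2\omega_M^2/(2d_{22})$. Together with $k_{d\omega}>0$ this yields the negative-definite estimate $\dot V\le-\tfrac{1}{2}\tilde v^\top D\tilde v$ already reported in the text.

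Third, to upgrade this bound to exponential convergence I would use the quadratic sandwich $\tfrac{1}{2}\lambda_{\min}(M)|\tilde v|^2\le V\le\tfrac{1}{2}\lambda_{\max}(M)|\tilde v|^2$ together with $-\tfrac{1}{2}\tilde v^\top D\tilde v\le-\tfrac{1}{2}\lambda_{\min}(D)|\tilde v|^2$, which gives $\dot V\le-cV$ with $c:=\lambda_{\min}(D)/\lambda_{\max}(M)>0$. The comparison lemma then furnishes $V(t)\le V(t_\circ)e^{-c(t-t_\circ)}$, and re-expanding in $|\tilde v|$ produces a bound of the form $|\tilde v(t)|\le\kappa\,|\tilde v(t_\circ)|\,e^{-(c/2)(t-t_\circ)}$ with $\kappa=\sqrt{\lambda_{\max}(M)/\lambda_{\min}(M)}$.

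There is no genuinely hard step here; the delicate points are bookkeeping. The one I would flag is the \emph{uniformity} claim: because $\omega^*$ enters \eqref{eq:underactuated-closed-loop} as an exogenous time-varying signal, the system is non-autonomous, and what must be verified is that neither the rate $c$ nor the overshoot $\kappa$ depends on the particular $\omega^*$\,---\,they depend on it only through the uniform bound $\omega_M$, which is exactly what the hypothesis $|\omega^*|_\infty\le\omega_M$ guarantees\,---\,and that both constants are independent of $t_\circ$. Granting this, the estimate above is uniform in the initial data and in the admissible signals $\omega^*$, which is precisely the asserted UGES of the origin of \eqref{eq:underactuated-closed-loop}.
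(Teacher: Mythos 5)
Your proposal is correct and follows essentially the same route as the paper: the same Lyapunov function $V(\tilde v)=\tfrac{1}{2}\tilde v^\top M\tilde v$, the same cancellation of the Coriolis term by skew-symmetry of $C(v)$, the same Young's-inequality absorption of the cross-term $-m_{11}\omega^*\tilde v_x\tilde v_y$ into $k_{dx}$ and half the damping $d_{22}$, yielding $\dot V\le -\tfrac{1}{2}\tilde v^\top D\tilde v$. Your only additions are bookkeeping steps the paper leaves implicit, namely making the skew-symmetry explicit and spelling out the comparison-lemma argument showing that the rate and overshoot depend only on $M$, $D$, and $\omega_M$ (not on the particular signal $\omega^*$ or on $t_\circ$), which is exactly the uniformity the statement asserts.
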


{\color{myred}
\begin{remark}
    Achieving velocity tracking for fully-actuated Lagrangian systems using PD+ controllers has been well studied since \cite{slotine1987adaptive,paden1988globally}. However, to the best of the authors' knowledge, our approach marks the first application of a PD+ controller to \textit{underactuated} Lagrangian systems. Despite the apparent similarity of \eqref{eq:1b} and \eqref{eq:fully-actuated}, extending similar results to the underactuated case is far from straightforward.
\end{remark}
}

\begin{remark}
    In Proposition \ref{prop:1}, we make an abuse of notation in writing `$v^*(t)$'. Indeed, $v^*=[v_x^*~ v_y^*~\omega^*]^\top$, where $v_x^*$ and $\omega^*$ are virtual control laws to be defined in function of $t$ and $e$, and $v_y^*$ is defined dynamically via \eqref{eq:constraint}. Therefore, in the statement of Proposition \ref{prop:1}, $v^*(t)$ must be considered as a continuous function of closed-loop states, evaluated along the system's trajectories. With this under consideration, the remaining task is to design the control laws $v_x^*$ and $\omega^*$ to stabilize the origin for the error kinematics (\ref{eq:error-kinematics}), under the constraint that $v_y^*$ must satisfy \eqref{eq:constraint}. As we show below, $v_x^*$ and $\omega^*$ may be defined as functions of $(t,e_x)$ and $(t,e_\theta)$, respectively. 
\end{remark}

\subsubsection{Control of the error kinematics}

Using $v=v^*+\tilde{v}$ and $e:=[e_x~e_y~e_\theta]^\top$, the error system (\ref{eq:error-kinematics}) becomes
\begin{equation}\label{eq:error-system}
    \dot e = f(t,e,v^*) + g(t,e,\tilde{v})\tilde{v},
\end{equation}
where the drift of the nominal system is
\begin{equation}\label{eq:nominal}
    f(t,e,v^*):=\begin{bmatrix}
        \omega^* e_y+v_x^* -v_{xd}\cos e_\theta - v_{yd}\sin e_\theta\\
        -\omega^* e_x +v_y^*+v_{xd}\sin e_\theta  - v_{yd}\cos e_\theta  \\
        \omega^*-\omega_d
    \end{bmatrix}
\end{equation}
and the input-gain matrix $g$ is given by
\begin{equation}\label{eq:g}
    g(t,e,\tilde{v}):=
    \begin{bmatrix}
        1\ & 0 & \phantom{-}e_y\\
        0\ & 1 & -e_x\\
        0\ & 0 & \phantom{-}1
    \end{bmatrix}.
\end{equation}
The control goal for \eqref{eq:constraint} and \eqref{eq:error-system} is to stabilize,  uniformly, asymptotically and globally, an equilibrium containing $\{e=0\}$. To that end, we replace the latter in the right-hand side of \eqref{eq:nominal} to see that $f(t,0,v^*) = 0$ if and only if $v^*=v_d$, where $v_d := [v^*_{xd}\ v^*_{yd}\ \omega^*_d]^\top$. Therefore, it results sensible that for the purpose of analysis, we rewrite \eqref{eq:constraint} in terms of the velocity error $\bar{v}_y:=v_y^*-v_{yd}$, which is obtained by subtracting the second equation in (\ref{eq:5b}) from (\ref{eq:constraint}), to obtain
 \begin{equation}\label{eq:24}
     \dot{\bar{v}}_y=-\frac{d_{22}}{m_{22}}\bar{v}_y-\frac{m_{11}}{m_{22}}\big[ v_x^*(t,e_x)\omega^*(t,e_\theta) - v_{xd}(t)\omega_d(t)   \big].
 \end{equation}
The right-hand side of \eqref{eq:24} equals to zero if $\bar v_y =0$,  $v^*_x=v_{xd}$, and $\omega^* = \omega_d$. Therefore,  $v_x^*(t,e_x)$ and $\omega^*(t,e_\theta)$ must be designed to satisfy  $v_x^*(t,0) = v_{xd}$ and $\omega^*(t,0) = \omega_d$.

Now, based on the statement of Proposition \ref{prop:1}, we regard the system \eqref{eq:underactuated-closed-loop}, \eqref{eq:error-system}, and \eqref{eq:24} as a cascaded system, {\it i.e.},
\begin{subequations}
  \label{295}
  \begin{eqnarray}
     \label{295a} 
     \Sigma_1 &:& 
     \begin{bmatrix}
       \dot e \\ \dot{\bar v}_y
     \end{bmatrix}
     = 
     \begin{bmatrix}
       f(t,e,v^*(t,e,\bar v_y)) \\ 
       -\frac{d_{22}}{m_{22}}\bar{v}_y-\frac{m_{11}}{m_{22}}\big[ v_x^*\omega^* - v_{xd}\omega_d \big]
     \end{bmatrix}
     + 
     \begin{bmatrix}
         g(\,\cdot\,)\\ 0
     \end{bmatrix}\tilde v
\nonumber \\
      \\
     \label{295b} 
     \Sigma_2 &:&  M\dot{\tilde{v}} = - C(v(t))\tilde{v} - [D+K_d]\tilde{v} + \Phi(t,\tilde v) \qquad \quad 
  \end{eqnarray}
\end{subequations}
where $\Phi(t,\tilde v):= [0\ -m_{11} \tilde{v}_x\omega^*(t,e_\theta(t))\ 0]^\top$.

In \eqref{295}, the system $\Sigma_1$ is non-autonomous; its state variables are $e$ and $\bar v_y$;  and it is perturbed by $\tilde v$, which is the state of $\Sigma_2$. Indeed, note that $f$ in \eqref{eq:nominal} is a function of $t$ through the reference trajectories $v_{xd}(t)$ and $v_{yd}(t)$ and depends on $e$, also, through the functions $v_x^*(t,e_x)$ and $\omega^*(t,e_\theta)$. In addition, $f$ depends on the state $\bar v_y$\,---\,see the term $v_y^* = \bar v_y + v_{yd}(t)$ in the second element in \eqref{eq:nominal}. On the other hand, \eqref{295b} is that of another non-autonomous system with state $\tilde v$ and depends on time through the trajectories $v(t)$ and $e_\theta(t)$. Strictly speaking, these equations are well-posed provided that the solutions are forward complete \cite{loria2008feedback}. This is shown farther below.

Then, \cite[Theorem 2]{panteley1998global} may be invoked to assess uniform global asymptotic stability of the origin for the overall closed-loop system \eqref{eq:underactuated-closed-loop}, (\ref{eq:error-system}), and \eqref{eq:24}, provided that: 

\begin{enumerate}[label={(C\arabic*)},wide = 0pt, leftmargin = 2.3em]
    \item \label{A1}  Under given controls  $v_x^*(t,e)$ and $\omega^*(t,e_\theta)$ the origin $\{(e,\bar v_y) = (0,0)\}$ is UGAS for the nominal system  
\begin{equation}
  \label{326} \Sigma_{1\circ} : 
     \begin{bmatrix}
       \dot e \\ \dot{\bar v}_y
     \end{bmatrix}
     = 
     \begin{bmatrix}
       f(t,e,v^*(t,e,\bar v_y)) \\ 
       -\frac{d_{22}}{m_{22}}\bar{v}_y-\frac{m_{11}}{m_{22}}\big[ v_x^*\omega^* - v_{xd}\omega_d \big]
     \end{bmatrix}
\end{equation}
and we dispose of a polynomial non-strict Lyapunov function \cite[Proposition 4.8]{sepulchre1997constructive}. 
    
    \item The function $g$ in \eqref{eq:g} satisfies the condition of linear growth in $|e|$: 
    \begin{equation}
        |g(t,e,\tilde{v})|\le \alpha_1(|\tilde{v}|)|e|+\alpha_2(|\tilde{v}|),
    \end{equation}
    where $\alpha_1,\alpha_2:\mathbb{R}_{\ge 0}\to\mathbb{R}_{\ge 0}$ are continuous.  \label{A2}
    
    \item  The origin $\{\tilde{v}=0\}$ is UGES for (\ref{eq:underactuated-closed-loop}), uniformly in $e(t)$ and $v(t)$.  \label{A3}
\end{enumerate}
Condition \ref{A2} holds trivially while, as explained above, Condition \ref{A3} holds after Proposition \ref{prop:1} provided that $t\mapsto \omega^*(t, e_\theta(t))$ is uniformly bounded. For the latter and Condition \ref{A1} to hold, we introduce the use of the control laws 
\begin{subequations}
  \label{eq:22}
  \begin{eqnarray}
     v_x^*(t,e_x) &:=& -k_xe_x+v_{xd}(t) \label{eq:22a}\\
     \omega^*(t,e_\theta) &:= & -k_\theta \tanh(e_\theta)+\omega_d(t),\label{eq:22b}
  \end{eqnarray}
\end{subequations}
where $k_x$ and $k_\theta$ are positive design parameters to be determined later. Note that since  $\omega_d\in\mathcal{L}_\infty$ by assumption, we have $|\omega^*|_\infty\le k_\theta + \bar{\omega}_d=:\omega_M$, as required to invoke Proposition \ref{prop:1} in the analysis of \eqref{eq:underactuated-closed-loop} and (\ref{eq:error-system}).

\begin{remark}
The control laws in \eqref{eq:22} are reminiscent of those proposed (without saturation) in \cite{panteley1998exponential}, where cascades- and persistency-of-excitation-based tracking control of nonholonomic vehicles was originally proposed for the first time. See also the subsequent work \cite{lefeber2000tracking} and some of the references therein.
\end{remark}

It is left to analyze the stability of the nominal system $\Sigma_{1\circ}$ in \eqref{326}, {\it i.e.}, to verify Condition \ref{A1} above. To that end, we employ another cascades argument. Indeed, the system $\Sigma_{1\circ}$ may be regarded as an inner-loop cascade in which the $e_\theta$ dynamics is decoupled from the rest, {\it i.e.}, substituting (\ref{eq:22}) into (\ref{eq:nominal}), the system in \eqref{326} may be written in the form:
\vskip 5pt
\begin{tabular}{lc}
\hspace{-7mm}   $\Sigma_1' \,:$ &\hspace{-4mm} 
  \begin{minipage}{3mm}
    $\left\{ \
  \begin{minipage}{0.8\columnwidth}
   \begin{subequations}\label{eq:25}
     \begin{eqnarray} && \nonumber\\[-9mm]
    \dot{e}_p&=&
    A(t)e_p+\bar{\mathbf{v}}_y+B_1(t,e_p,e_\theta) \hspace{27.3mm}\label{eq:25a}\\
    \dot{\bar{v}}_y&=&-\frac{d_{22}}{m_{22}}\bar{v}_y+\frac{m_{11}}{m_{22}}k_x \omega_d(t) e_x + B_2(t,e_p,e_\theta)\nonumber \\ &&
\label{eq:25b}
     \end{eqnarray}
   \end{subequations}    
  \end{minipage}
    \right.$
  \end{minipage}
\end{tabular}
\begin{equation}
  \label{eq:25c} 
\mbox{}\hspace{-3mm} \Sigma_2' \,: \hspace{14pt} \dot{e}_\theta \,=\, -k_\theta \tanh(e_\theta) \hspace{37mm}\ 
\end{equation}
\vskip 4pt
\noindent where $e_p:=[e_x~e_y]^\top$, $\bar{\mathbf{v}}_y:=[0~\bar{v}_y]^\top$,
\begin{equation*}
    A(t) :=\begin{bmatrix}
        -k_x & \omega_d(t)\\
        - \omega_d(t) & 0
    \end{bmatrix},
\end{equation*}
\begin{equation*}
    B_1(\cdot):=\begin{bmatrix}
          v_{xd}(1-\cos (e_\theta))-v_{yd}\sin(e_\theta)-k_\theta e_y\tanh(e_\theta)\\
          v_{xd}\sin(e_\theta)+v_{yd}(1-\cos (e_\theta))+k_\theta e_x\tanh(e_\theta)
    \end{bmatrix},
\end{equation*}
and 
\begin{equation*}
    B_2(t,e_p,e_\theta) := \left[ v_{xd} -  k_x e_x \right] \frac{m_{11}}{m_{22}} k_\theta \tanh(e_\theta).
\end{equation*}

Following \cite[Theorem 2]{panteley1998global}, we conclude that for the system (\ref{eq:25})-(\ref{eq:25c}), the origin is UGAS if:
\begin{enumerate}[label={(C\arabic*')},wide = 0pt, leftmargin = 2.3em]
    \item The origin of the system $\Sigma_1'$ with  $\{e_\theta=0\}$, that is, 
      \begin{subequations}
        \label{eq:26}
        \begin{eqnarray}
        \Sigma_1'' \,: \dot{e}_p &=& A(t)e_p+\bar{\mathbf{v}}_y\label{eq:26a}\\
        \Sigma_2'' \,: \dot{\bar{v}}_y&=&-\frac{d_{22}}{m_{22}}\bar{v}_y+\frac{m_{11}}{m_{22}}k_x \omega_d(t) e_x,\quad\label{eq:26b}
        \end{eqnarray}
      \end{subequations}
    is UGAS with a polynomial Lyapunov function.\label{B1}
    
    \item The functions $B_1$ and $B_2$ satisfy the linear-growth condition
    \begin{equation}
        |B_i(t,e_p,e_\theta)|\le \alpha_{1i}(|e_\theta|)|e_p|+\alpha_{2i}(|e_\theta|),
    \end{equation}
    where $i\in \{1,2\}$, and  $\alpha_{1i}$ and $\alpha_{2i}$ are class $\mathcal K$ functions.\label{B2}
    
    \item The origin $\{e_\theta=0\}$ is GAS and LES for (\ref{eq:25c}).\label{B3}
\end{enumerate}

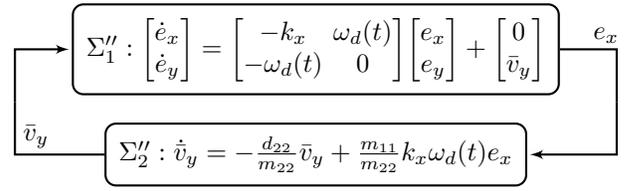
\begin{figure}[t]
    \centering
  \begin{tikzpicture}[node distance=1.5cm,auto,>=latex']
    \node [mynode] (sigmaone) at (0,.7) {
      $\Sigma_1'': \begin{bmatrix}
        \dot e_x \\ \dot e_y
      \end{bmatrix} = 
      \begin{bmatrix}
        -k_x & \omega_d(t) \\
        -\omega_d(t) & 0 
      \end{bmatrix}\!
      \begin{bmatrix}
        e_x \\ e_y
      \end{bmatrix} + 
      \begin{bmatrix}
        0 \\ \bar v_y
      \end{bmatrix}$} ;
    \node [mynode] (sigmatwo) at (0,-.7) {$\Sigma_2'':\dot{\bar{v}}_y = -\frac{d_{22}}{m_{22}}\bar{v}_y+\frac{m_{11}}{m_{22}}k_x \omega_d(t) e_x$};
    \coordinate (a) at (4,0) {};
    \coordinate (b) at (-4,0) {};

    \draw[thick] (sigmaone.east) -| (a) node [label={[xshift=-4pt, yshift=15pt]{\color{black}$e_x$}}] {} ;
    \draw[myarrow] (a) |- (sigmatwo.east) ;

    \draw[thick] (sigmatwo.west) -| (b) node [label={[xshift=8pt, yshift=-25pt]{\color{black}$\bar v_y$}}] {} ;
    \draw[myarrow] (b) |- (sigmaone.west) ;
\end{tikzpicture}

  \caption{Small-gain feedback representation of the system (\ref{eq:26}).}
    \label{fig:feedback}
\end{figure}

Conditions \ref{B2} and \ref{B3} hold trivially. For Condition \ref{B1}, the system (\ref{eq:26}) may be regarded as the feedback interconnection of two subsystems, as shown in Fig. \ref{fig:feedback}.  The stability of the origin of the feedback interconnection may be concluded by invoking the small-gain theorem \cite[Theorem 8.2.1]{schaft2017l2}, which is recalled in the appendix for convenience. 

\begin{proposition}[Nominal system]\label{prop:2}
    Consider the feedback-interconnected system (\ref{eq:26}). Suppose that $\omega_d$ is persistently exciting, {\it i.e.}, there exist $\mu,T>0$ such that 
    \begin{equation}\label{eq:PE}
    \int_t^{t+T}\omega_d(s)^2{\rm d}s\ge \mu,\quad \forall~t\ge 0.
    \end{equation}
    Furthermore, suppose that 
    \begin{equation}\label{eq:bound}
\max\{|\omega_d|_\infty,|\dot{\omega}_d|_\infty\}\leq
\bar{\omega}_d\le  \left(\frac{d_{22}}{m_{11}}\cdot\frac{\mu^2}{4\mu T + T^2}\right)^{\frac{1}{3}}.
    \end{equation}
    Then there exists a constant $\bar{k}_x>0$ such that for all $k_x\in(0,\bar{k}_x]$, the origin of the feedback interconnection (\ref{eq:26}) is UGAS.
\end{proposition}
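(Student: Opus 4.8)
The plan is to carry out the cascade/small-gain program that the text has already set up: treat \eqref{eq:26} as the feedback interconnection of a forward block $\Sigma_1''$ (input $\bar v_y$, output $e_x$) and a feedback block $\Sigma_2''$ (input $e_x$, output $\bar v_y$), as in Fig.~\ref{fig:feedback}, estimate a finite $\mathcal{L}_2$-gain for each block, and close the loop with \cite[Theorem 8.2.1]{schaft2017l2}. I would first dispose of the easy block $\Sigma_2''$. With the storage function $S_2(\bar v_y)=\tfrac12\bar v_y^2$, differentiation along \eqref{eq:26b} gives $\dot S_2=-\tfrac{d_{22}}{m_{22}}\bar v_y^2+\tfrac{m_{11}}{m_{22}}k_x\omega_d(t)e_x\bar v_y$; since $|\omega_d|_\infty\le\bar\omega_d$ by \eqref{eq:bound}, a standard dissipation estimate yields the $\mathcal{L}_2$-gain from $e_x$ to $\bar v_y$, namely $\gamma_2=\tfrac{m_{11}\bar\omega_d}{d_{22}}\,k_x$, which is linear in $k_x$ and vanishes as $k_x\to 0$.

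The crux is the forward block $\Sigma_1''$, i.e.\ the linear time-varying system $\dot e_p=A(t)e_p+\bar{\mathbf v}_y$ with output $e_x$. Here I would first establish that the unforced system $\dot e_p=A(t)e_p$ is UGES, using the persistency of excitation \eqref{eq:PE} of $\omega_d$. Note that $V_0=\tfrac12|e_p|^2$ only gives $\dot V_0=-k_xe_x^2+e_y\bar v_y$, because the rotational cross-terms $\pm\omega_d e_xe_y$ cancel; the $e_y$ direction carries no direct damping, so exponential decay must come entirely from PE. I would therefore build a strict Lyapunov function $V(t,e_p)=V_0+\varrho(t)\,e_xe_y$ (or a comparable cross-term), where the bounded weight $\varrho$ is assembled from a finite-horizon integral of $\omega_d$ so that its derivative injects negativity along $e_y$ at a rate governed by $\mu$ and $T$. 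From the resulting dissipation inequality I would read off an explicit $\mathcal{L}_2$-gain $\gamma_1$ from $\bar v_y$ to $e_x$. The delicate bookkeeping is to track constants so that the estimate has the form $\gamma_1\le \tfrac{\bar\omega_d^2}{k_x}\cdot\tfrac{4\mu T+T^2}{\mu^2}\,(1+o(1))$ as $k_x\to 0$; that $\gamma_1$ blows up like $1/k_x$ is unavoidable, since at $k_x=0$ the matrix $A(t)$ is skew-symmetric and the block merely preserves the norm, so it has no finite gain.

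Finally I would impose the small-gain condition $\gamma_1\gamma_2<1$. Because $\gamma_2\propto k_x$ while $\gamma_1\propto 1/k_x$, the factor $k_x$ cancels in the leading term and $\gamma_1\gamma_2\to \tfrac{m_{11}\bar\omega_d}{d_{22}}\cdot\tfrac{\bar\omega_d^2(4\mu T+T^2)}{\mu^2}=\tfrac{m_{11}\bar\omega_d^3(4\mu T+T^2)}{d_{22}\mu^2}$ as $k_x\to0^+$. Assumption \eqref{eq:bound} is exactly the statement that this limit is $\le 1$, i.e.\ $\bar\omega_d^3\le\tfrac{d_{22}}{m_{11}}\cdot\tfrac{\mu^2}{4\mu T+T^2}$. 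Making the inequality strict and invoking continuity of $\gamma_1\gamma_2$ in $k_x$ furnishes a threshold $\bar k_x>0$ such that $\gamma_1\gamma_2<1$ for all $k_x\in(0,\bar k_x]$. The small-gain theorem then gives $\mathcal{L}_2$-stability of the interconnection; since both storage functions are positive definite and radially unbounded and the forward block is UGES (hence zero-state detectable), this upgrades to UGAS of the origin of \eqref{eq:26}.

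I expect the second step to be the main obstacle: constructing the PE-based strict Lyapunov function for $\Sigma_1''$ and extracting the gain $\gamma_1$ with the sharp constant $(4\mu T+T^2)/\mu^2$. Everything downstream—most notably the precise form of the bound \eqref{eq:bound} and the existence of the threshold $\bar k_x$—hinges on obtaining this estimate with the correct $k_x$-scaling and the correct dependence on the PE parameters $\mu$ and $T$.
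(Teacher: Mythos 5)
Your route is the paper's route: the same block decomposition as Fig.~\ref{fig:feedback}, the same storage function and gain $\gamma_2=m_{11}k_x\bar\omega_d/d_{22}$ for $\Sigma_2''$, a PE-based strict Lyapunov function for $\Sigma_1''$ whose gain scales like $\gamma_1\approx\frac{\bar\omega_d^2}{k_x}\cdot\frac{4\mu T+T^2}{\mu^2}$, and the limiting argument $k_x\to0^+$ showing that \eqref{eq:bound} caps $\gamma_1\gamma_2$ at one. You even identified correctly that the $1/k_x$ blow-up of $\gamma_1$ is unavoidable and that the product's limit is exactly what \eqref{eq:bound} controls. (Your hedge about ``making the inequality strict'' is warranted: with these gain estimates $\gamma_1\gamma_2$ approaches its limit from above as $k_x\downarrow0$, so the equality case of \eqref{eq:bound} is not genuinely covered---a wrinkle the paper's own proof glosses over as well.)

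The one step that would fail as you wrote it is the Lyapunov construction for $\Sigma_1''$, which is precisely the step you flagged as the crux. A function of the form $V(t,e_p)=\tfrac12|e_p|^2+\varrho(t)e_xe_y$ with $\varrho$ bounded can never be a \emph{strict} Lyapunov function for $\dot e_p=A(t)e_p$, no matter how $\varrho$ is assembled from integrals of $\omega_d$: along the unforced dynamics the only $e_y^2$ term in $\dot V$ is $\varrho(t)\omega_d(t)e_y^2$, so at any instant where $\omega_d(t)=0$ (which PE permits) and on the line $e_x=0$ one gets $\dot V=0$ with $e_p\neq0$. Pointwise decay cannot be extracted from a cross term alone; the ``average'' negativity supplied by PE must be stored in a time-varying \emph{quadratic} weight. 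That is the paper's missing ingredient: it takes $W_1(t,e_p)=\tfrac12\bigl[\Upsilon_{\omega_d^2}(t)+\alpha\bigr]|e_p|^2-\omega_d(t)e_xe_y$ with $\Upsilon_{\omega_d^2}(t)=1+2\bar\omega_d^2T-\tfrac{2}{T}\int_t^{t+T}\!\!\int_t^m\omega_d^2(s)\,{\rm d}s\,{\rm d}m$, which satisfies $\dot\Upsilon_{\omega_d^2}\le-\tfrac{2\mu}{T}+2\omega_d^2(t)$. The cross term $-\omega_d e_xe_y$ contributes $-\omega_d^2e_y^2$, cancelling exactly the $+\omega_d^2e_y^2$ excess coming from $\dot\Upsilon_{\omega_d^2}$, while the $-2\mu/T$ part of $\dot\Upsilon_{\omega_d^2}$ delivers the uniform $-\tfrac{\mu}{T}|e_p|^2$; the constant $\alpha\ge\tfrac{2\bar\omega_d^2}{k_x}\bigl(1+\tfrac{T}{4\mu}(1+k_x)^2\bigr)$ then absorbs the residual $e_x^2$ and cross terms, and it is this $\alpha$ that produces the $1/k_x$ scaling of $\gamma_1$ you anticipated. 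With $W_1$ in place of your $V$, the rest of your proposal goes through essentially verbatim.
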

\begin{proof}
  The proof follows by invoking Theorem \ref{thm:small-gain} in the Appendix. In accordance, we  establish  the following:
\begin{enumerate}
    \item[1)] $\Sigma_1''$ has $\mathcal{L}_2$-gain $\gamma(\Sigma_1'')\le \gamma_1$;
    \item[2)] $\Sigma_2''$ has $\mathcal{L}_2$-gain $\gamma(\Sigma_2'')\le \gamma_2$;
    \item[3)] The small-gain condition $\gamma_1 \cdot \gamma_2<1$ holds.
\end{enumerate}
\textbf{Item 1)} We prove item 1) by constructing an ISS Lyapunov function for $\Sigma_1''$. To that end, for the linear system $\Sigma_{1\circ}'':\dot{e}_p=A(t)e_p$, we show that 
\begin{equation}\label{eq:W1}
    W_1(t,e_p):=\frac{1}{2}[\Upsilon_{\omega_d^2}(t)+\alpha](e_x^2+e_y^2)-\omega_d(t)e_xe_y,
\end{equation}
 where $\alpha$ is a constant satisfying
\begin{equation}\label{eq:alpha}
    \alpha\ge \max\left\{\bar{\omega}_d,\frac{2\bar{\omega}_d^2}{k_x}\left(1+\frac{T}{4\mu}(1+k_x)^2\right)\right\},
\end{equation}
is a strict Lyapunov function. The function $\Upsilon: \mathbb{R}_{\ge 0}\to\mathbb{R}_{\ge 0}$, which is inspired from \cite{MAGLOR-JP1}, is defined as
\begin{equation}
    \Upsilon_{\omega_d^2}(t):=1+2\bar{\omega}_d^2T-\frac{2}{T}\int_t^{t+T}\int_t^m\omega_d^2(s){\rm d}s{\rm d}m,
\end{equation}
and it satisfies
\begin{equation}\label{eq:Upsilon}
    1+\bar{\omega}_d^2T\le  \Upsilon_{\omega_d^2}(t)\le 1+2\bar{\omega}_d^2T,  \qquad \forall\,t\ge 0.
\end{equation}
It follows from (\ref{eq:alpha}), (\ref{eq:Upsilon}), and  Young’s inequality, that
\begin{equation}
    \frac{1}{2}|e_p|^2\le W_1(t,e_p)\le \frac{1}{2}[1+2\bar{\omega}_d^2T+\alpha+\bar{\omega}_d]|e_p|^2.
\end{equation}

Now, the time derivative of $\Upsilon_{\omega_d^2}(t)$ satisfies 
\begin{align}
    \dot{\Upsilon}_{\omega_d^2}(t)&=-\frac{2}{T}\int_t^{t+T} \omega_d^2(s){\rm d}s+2\omega_d^2(t)  \notag\\
    &\le -\frac{2\mu}{T}+2\omega_d^2(t),
\end{align}
so the total derivative of $W_1$ along trajectories of $\Sigma_{1\circ}''$ yields
\begin{align}
    \dot{W}_1|_{\Sigma_{1\circ}''}&=-k_x[\Upsilon_{\omega_d^2}(t)+\alpha]e_x^2-\left(\frac{1}{T}\int_t^{t+T}\omega_d^2(s){\rm d}s\right)|e_p|^2  \notag\\
    &\quad +2\omega_d^2(t)e_x^2+[k_x\omega_d(t)-\dot{\omega}_d(t)]e_xe_y.
\end{align}
It follows from (\ref{eq:alpha}) and Young's inequality that
\begin{equation}
    \dot{W}_1|_{\Sigma_{1\circ}''}\le -\frac{\mu}{2T}|e_p|^2.
\end{equation}
Hence, for the linear system $\dot{e}_p=A(t)e_p$, $W_1$ is a strict Lyapunov function, and the origin of the system is UGES.

Next, we show that $W_1$ is an ISS Lyapunov function for $\Sigma_1''$. The total derivative of $W_1$ along trajectories of $\Sigma_1''$ yields
\begin{align}
    \dot{W}_1 
    &\le -\frac{\mu}{2T}|e_p|^2 + \beta|e_p||\bar{v}_y|, \notag
\end{align}
\color{black}where $\beta:=1+2\bar{\omega}_d^2T+\alpha+\bar{\omega}_d$. It follows from Young's inequality that 
\begin{equation}
    \dot{W}_1\le -\frac{\mu}{4T}|e_p|^2 + \frac{T\beta^2}{\mu} \bar{v}_y^2.
\end{equation}
Let $V_1:=\frac{2T}{\mu}W_1$ and we have
\begin{equation}
    \dot{V}_1\le -\frac{1}{2}|e_p|^2+\frac{\gamma_1^2}{2}\bar{v}_y^2,
\end{equation}
where 
\begin{equation}\label{eq:gamma1}
    \gamma_1:=\frac{2T\beta}{\mu}.
\end{equation}
It follows that $\Sigma_1''$ has $\mathcal{L}_2$-gain $\gamma(\Sigma_1'')\le \gamma_1$. 

\textbf{Item 2)} The total derivative of the storage function $V_2:=\frac{m_{22}}{2d_{22}}\bar{v}_y^2$ along trajectories of $\Sigma_2''$ yields
\begin{equation}
    \dot{V}_2   \leq  -\frac{1}{2}\bar{v}_y^2+\frac{\gamma_2^2}{2} e_x^2,\label{eq:dot_V2}
\end{equation}
\color{black}where 
\begin{equation}\label{eq:gamma2}
    \gamma_2:=\frac{m_{11}k_x\bar{\omega}_d}{d_{22}}.
\end{equation}
It follows that $\Sigma_2''$ has $\mathcal{L}_2$-gain $\gamma(\Sigma_2'')\le \gamma_2$. Note that $\gamma_2$ can be made arbitrarily small by selecting the parameter $k_x$.

\textbf{Item 3)} From (\ref{eq:gamma1}) and (\ref{eq:gamma2}), we have 
\begin{align}
    \gamma_1\cdot\gamma_2&= \frac{2T\beta}{\mu}\cdot\frac{m_{11}}{d_{22}}k_x\bar{\omega}_d \notag\\
    &<\frac{2Tm_{11}}{\mu d_{22}}k_x\bar{\omega}_d\bigg(1+2\bar{\omega}_d^2T+2\bar{\omega}_d \notag\\
    &\quad~ +\left.\frac{2\bar{\omega}_d^2}{k_x}\left(1+\frac{T}{4\mu}(1+k_x^2)\right) \right).  \label{eq:above}
\end{align}
We see that the right-hand side of the inequality (\ref{eq:above}) is a continuous function of $k_x$, and it monotonically decreases as $k_x$ decreases. Therefore, taking the limit as $k_x$ tends to zero yields
\begin{equation}
    \gamma_1\cdot\gamma_2<\frac{4Tm_{11}\bar{\omega}_d^3}{\mu d_{22}}\left(1+\frac{T}{4\mu}\right).
\end{equation}
Finally, $\bar{\omega}_d$ satisfies (\ref{eq:bound}), which implies that the small-gain condition $\gamma_1\cdot\gamma_2<1$ holds. The zero-state detectability conditions for $\Sigma_1''$ and $\Sigma_2''$ can be easily verified. It follows from Theorem \ref{thm:small-gain} that the origin of the feedback interconnection (\ref{eq:26a})-(\ref{eq:26b}) is UGAS, which completes the proof.    
\end{proof}

The proof of Proposition \ref{prop:2} not only demonstrates the UGAS of the feedback interconnection (\ref{eq:26}) but also provides a strict Lyapunov function. Because $\gamma_1\cdot\gamma_2<1$, there exists $\lambda>0$ such that $\gamma_1<\lambda<\frac{1}{\gamma_2}$. Then, $\mathcal{V}:=V_1+\lambda^2V_2$ is a strict Lyapunov function for (\ref{eq:26}). Due to the fact that $\mathcal{V}$ is quadratic, we verify Condition \ref{B1}, and thus, the origin of the system (\ref{eq:25})-(\ref{eq:25c}) is UGAS. Consequently, we verify Condition \ref{A1} and conclude that the origin of the closed-loop system (\ref{eq:error-system}) with control laws (\ref{eq:constraint}) and (\ref{eq:22}) is UGAS \cite[Theorem 2]{panteley1998global}. Thus, we have the following. 
\begin{proposition}[Trajectory tracking]\label{thm:1}
    Consider the system (\ref{eq:1}) together with with the control laws (\ref{eq:constraint}), (\ref{eq:underactuated-control}), and (\ref{eq:22}). Suppose that there exists a positive constant $\bar{\omega}_d$ such that $\max\{|\omega_d|_\infty,|\dot{\omega}_d|_\infty\}\le \bar{\omega}_d$. Furthermore, suppose that $\omega_d$ is persistently exciting in the sense of (\ref{eq:PE}), and that $\bar{\omega}_d$ satisfies (\ref{eq:bound}). Then, there exists a constant $\bar{k}_x>0$ such that for all $k_x\in(0,\bar{k}_x]$ and all $k_\theta\in(0,\infty)$, there exists a constant $\bar{k}_{dx}>0$, such that for all $k_{dx}\in[\bar{k}_{dx},\infty)$ and all $k_{d\omega}\in (0,\infty)$, the origin of the closed-loop system, $(e,\bar{v}_y,\tilde{v})=(0,0,0)$, is UGAS (in particular, the trajectory tracking control problem is solved).
\end{proposition}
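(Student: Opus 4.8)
The plan is to glue the already-established pieces together by applying the cascaded-systems theorem \cite[Theorem 2]{panteley1998global} twice: once to the outer cascade \eqref{295}, and once, inside it, to the inner cascade \eqref{eq:25}--\eqref{eq:25c}. First I would read off \eqref{295} as a cascade whose driving subsystem is $\Sigma_2$ (state $\tilde v$) and whose driven subsystem is $\Sigma_1$ (states $e,\bar v_y$), the coupling entering through $g$. To invoke the theorem I must check its three hypotheses. Condition (C3) is delivered by Proposition \ref{prop:1}: with $\omega^*$ given by \eqref{eq:22b} we have $|\omega^*|_\infty \le k_\theta + \bar\omega_d =: \omega_M$, so choosing $k_{dx} \ge m_{11}^2\omega_M^2/(2d_{22})$ renders $\{\tilde v = 0\}$ UGES uniformly in $e(t)$ and $v(t)$; UGES in particular makes $t\mapsto|\tilde v(t)|$ integrable uniformly in the initial data, which is the decay requirement the cascade theorem places on the driving state. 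Condition (C2) is immediate since $g$ in \eqref{eq:g} is affine in $e$. The only substantive requirement is (C1): UGAS of the nominal driven system $\Sigma_{1\circ}$ in \eqref{326} together with a \emph{polynomial} Lyapunov function, the polynomial structure being what guarantees the gradient-growth bound the theorem imposes.

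To verify (C1) I would descend one level and treat $\Sigma_{1\circ}$, rewritten as \eqref{eq:25}, as a second cascade driven by the scalar subsystem \eqref{eq:25c}, namely $\dot e_\theta = -k_\theta\tanh e_\theta$, with driven subsystem $\Sigma_1'$. Here (C3') is immediate: the scalar $\tanh$ equation is GAS and LES, its solution decays and is integrable in time with no finite escape since $\tanh$ is globally Lipschitz. Condition (C2') holds because each entry of $B_1,B_2$ is either uniformly bounded (the terms $v_{xd}(1-\cos e_\theta)$, $v_{yd}\sin e_\theta$, etc., using $|v_{xd}|_\infty, |v_{yd}|_\infty, |\omega_d|_\infty < \infty$) or grows at most linearly in $e_p$ with a class-$\mathcal K$ coefficient in $|e_\theta|$ (the $k_\theta e_y\tanh e_\theta$, $k_\theta e_x\tanh e_\theta$ and $k_x e_x$ contributions). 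Condition (C1') is exactly Proposition \ref{prop:2}: under \eqref{eq:PE} and \eqref{eq:bound} there is $\bar k_x > 0$ such that for every $k_x \in (0,\bar k_x]$ the interconnection \eqref{eq:26} is UGAS, and its proof furnishes the strict Lyapunov function $\mathcal V = V_1 + \lambda^2 V_2$ with $\gamma_1 < \lambda < 1/\gamma_2$. Since $V_1$ and $V_2$ are quadratic, $\mathcal V$ is quadratic, hence polynomial, and (C1') holds. Applying \cite[Theorem 2]{panteley1998global} to \eqref{eq:25}--\eqref{eq:25c} then yields UGAS of the origin of $\Sigma_{1\circ}$, and composing $\mathcal V$ with a Lyapunov function for the $e_\theta$ subsystem as in \cite[Proposition 4.8]{sepulchre1997constructive} produces a polynomial Lyapunov function for $\Sigma_{1\circ}$, completing (C1) at the outer level.

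With (C1)--(C3) in hand I would apply \cite[Theorem 2]{panteley1998global} to the outer cascade \eqref{295} and conclude that $(e,\bar v_y,\tilde v) = (0,0,0)$ is UGAS. The parameters thread together in the order asserted by the statement: $\bar k_x$ is fixed first by Proposition \ref{prop:2}, $k_\theta > 0$ is free, and then $\omega_M = k_\theta + \bar\omega_d$ and hence $\bar k_{dx} = m_{11}^2\omega_M^2/(2d_{22})$ are determined, so that any $k_{dx}\ge \bar k_{dx}$ and $k_{d\omega} > 0$ are admissible.

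I expect the main obstacle to be the structural bookkeeping demanded by the cascade theorem rather than any fresh stability estimate, since the latter is entirely contained in Propositions \ref{prop:1} and \ref{prop:2}. Two points require care: first, forward completeness, which the theorem needs so that the solutions are defined on $[t_\circ,\infty)$---this follows because $\Sigma_1$ grows at most linearly in its own state through $g$ and is driven by the integrable signal $\tilde v$, so a Gronwall/comparison argument precludes finite escape; and second, ensuring that the Lyapunov function available at each level is genuinely polynomial, which is precisely why Proposition \ref{prop:2} was arranged to deliver a \emph{quadratic} $\mathcal V$. Threading these structural conditions, together with the parameter dependencies in the correct order, is where the real work of the proof lies.
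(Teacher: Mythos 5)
Your proposal is correct and follows essentially the same route as the paper: the paper's own (very terse) proof consists of exactly the two ingredients you supply---forward completeness via a quadratic comparison function (the paper uses $U = |e|^2 + |\tilde v|^2 + |\bar v_y|^2$ and $\dot\nu \le a\nu + b$, matching your Gronwall argument) and then the double application of \cite[Theorem 2]{panteley1998global} to the outer cascade \eqref{295} and the inner cascade \eqref{eq:25}--\eqref{eq:25c}, with Conditions \ref{A1}--\ref{A3} and \ref{B1}--\ref{B3} discharged by Propositions \ref{prop:1} and \ref{prop:2} and the quadratic Lyapunov function $\mathcal V = V_1 + \lambda^2 V_2$, exactly as you describe. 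Your parameter ordering ($\bar k_x$ from Proposition \ref{prop:2} first, $k_\theta$ free, then $\omega_M = k_\theta + \bar\omega_d$ determining $\bar k_{dx} = m_{11}^2\omega_M^2/(2d_{22})$) also matches the paper's intent.
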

\begin{proof}
Consider the function 
\begin{equation}
  \label{614} U(e,\bar v_y, \tilde v) = |e|^2 + |\tilde v|^2 + |\bar v_y|^2
\end{equation}
After direct computation of $\dot U$ along the trajectories of \eqref{eq:underactuated-closed-loop}, \eqref{eq:error-system}, and \eqref{eq:24}, we see that there exist $a$ and $b>0$ such that $\dot \nu(t) \leq a\nu(t) + b$, where $\nu(t):= U(e(t),\bar v_y(t), \tilde v(t))$. Therefore, the closed-loop system is forward complete. After Propositions \ref{prop:1} and \ref{prop:2}, and the argumentation on \eqref{295} above, the statement follows.
\end{proof}

\subsection{Formation Tracking control}

Let us consider a group of $n$ surface vessels that are required to advance in formation and are modeled by the equations
\begin{subequations}
\label{eq:46}
  \begin{eqnarray}
    &\dot{q}_i=J(q_i)v_i& \label{eq:46a}\\
    &M\dot{v}_i+C(v_i)v_i+Dv_i=G\tau_i,& \label{eq:46b}
  \end{eqnarray}
\end{subequations}
where for the $i$th vessel, $q_i=[x_i~y_i~\theta_i]^\top$ is the configuration containing the Cartesian coordinates $(x_i,y_i)$ and the orientation $\theta_i$ in the fixed inertia frame; $v_i=[v_{xi}~v_{yi}~\omega_i]^\top$ is the generalized velocity vector consisting of the linear velocity $(v_{xi},v_{yi})$ and the angular velocity $\omega_i$ in the body-fixed frame; and $\tau_i=[\tau_{xi}~\tau_{\omega i}]^\top$ is the control input vector consisting of the surge force and the yaw torque, respectively.

The formation tracking control problem consists in making the $n$ surface vessels take specific postures assigned by the topology designer while guiding the swarm along a path determined by a virtual reference surface vessel labeled as agent-$0$. It is assumed that the $i$th surface vessel follows a leader, indexed $i-1$, thus forming a directed spanning tree graph communication topology. In other words, each surface vessel has only one leader, but it may have several followers. Additionally, the swarm has only one swarm leader vessel labeled as agent-$1$, which is the sole entity with access to the reference trajectory generated by the virtual leader agent-$0$.

Similar to (\ref{eq:error}), we define formation error $e_i=[e_{xi}~ e_{yi}~ e_{\theta i}]^\top$ for the $i$th surface vessel in its local coordinates as 
\begin{align}\label{eq:formation-error}
    \begin{bmatrix}
        e_{xi} \\ e_{yi} \\ e_{\theta i}
    \end{bmatrix}
    =
    \begin{bmatrix}
        \cos\theta_i & \sin\theta_i & 0 \\
        -\sin\theta_i & \cos\theta_i & 0\\
        0 & 0 & 1
    \end{bmatrix}
    \begin{bmatrix}
        x_i-x_{i-1}-d_{xi} \\ y_i-y_{i-1}-d_{yi} \\ \theta_i-\theta_{i-1}
    \end{bmatrix},
\end{align}
where $(d_{xi},d_{yi})$ is the desired relative displacement between the $i$th vessel and its leader, and defines the geometry of the formation. Thus, the error dynamics between the $i$th surface vessel and its leader become
\begin{subequations}
\label{eq:error-kinematics-formation}
  \begin{eqnarray}
    \label{eq:error-kinematics-formation:a}
    \dot{e}_{xi}&=&\omega_i e_{yi} +v_{xi} -v_{x(i-1)}\cos e_{\theta i} - v_{y(i-1)}\sin e_{\theta i} \\
    \label{eq:error-kinematics-formation:b}
    \dot{e}_{yi}&=&-\omega_i e_{xi} +v_{yi} +v_{x(i-1)}\sin e_{\theta i} - v_{y(i-1)}\cos e_{\theta i}\qquad\ \\
    \label{eq:error-kinematics-formation:c}
    \dot{e}_{\theta i}&=&\omega_i-\omega_{i-1}.
  \end{eqnarray}
\end{subequations}

For $i=1$, we recover the error dynamics for the case of trajectory tracking, {\it i.e.}, $v_0:=v_d$. Then, for $i\le n$, similar to (\ref{eq:constraint}) and (\ref{eq:22}), we introduce the virtual controls 
\begin{subequations}\label{eq:49}
  \begin{eqnarray}
      v_{xi}^*&:=&-k_{xi}e_{xi}+v_{x(i-1)} \label{eq:49a}\\
    \omega_i^*&:=&-k_{\theta i} \tanh(e_{\theta i})+\omega_{i-1},\label{eq:49b}
  \end{eqnarray}
\end{subequations}
and $v_{yi}^*$ given by the integration of the differential equation
\begin{equation}\label{eq:mas-constraint}
    \dot{v}_{yi}^*=-\frac{m_{11}}{m_{22}}v_{xi}^*\omega_i^*-\frac{d_{22}}{m_{22}}v_{yi}^*, \quad v_{yi}^*(0)=v_{y(i-1)}(0).
\end{equation}
With $v_i^*=[v_{xi}^*~v_{yi}^*~\omega_i^*]^\top$, the actual control $\tau_i$ is given by
\begin{equation}\label{eq:mas-underactuated-control}
    \tau_i=G^\dagger(M\dot{v}_i^*+C(v_i)v_i^*+Dv_i^*-K_{di} \tilde{v}_i),
\end{equation}
where the velocity error $\tilde{v}_i:=[\tilde{v}_{xi}~\tilde{v}_{yi}~\tilde{\omega}_i]^\top:=v_i-v_i^*$ and the gain matrix $K_{di}=\operatorname{diag}\{k_{dxi},0,k_{d\omega i}\}\ge 0$.

Next, let us define $\Delta v_i:=[\Delta v_{xi}~\Delta v_{yi}~ \Delta \omega_i]^\top:=v_i-v_d$ and $\bar{v}_{yi}:=v_{yi}^*-v_{y(i-1)}$ for all $i\le n$. Then, we replace $v_i$ with $v_i^*+\tilde{v}_i$ in (\ref{eq:error-kinematics-formation}), substitute (\ref{eq:49}) into (\ref{eq:error-kinematics-formation}) and (\ref{eq:mas-constraint}), and substitute (\ref{eq:mas-underactuated-control}) into (\ref{eq:46b}), which yields the closed-loop error system 
\begin{subequations}\label{eq:CL}
\begin{eqnarray}
  \dot{\xi}_i&=&F(t,\xi_i)+H(e_i)\Delta v_{i-1}+G(e_i)\tilde{v}_i \label{eq:CL-kinematics}\\
  M\dot{\tilde{v}}_i&=&-C(v_i(t))\tilde{v}_i-[D+K_{di}]\tilde{v}_i+\Phi_i(t,\tilde{v}_i),\qquad\;\; \label{eq:CL-dynamics}
\end{eqnarray}
\end{subequations}
where---\textit{cf.} (\ref{295}), $\xi_i=[e_{pi}^\top~\bar{v}_{yi}~e_{\theta i}]^\top$, $e_{pi}=[e_{xi}~e_{yi}]^\top$,
\begin{equation*}
    F(t,\xi_i):=\begin{bmatrix}
        A(t)e_{pi}+\bar{\mathbf{v}}_{yi}+B_1(t,e_{pi},e_{\theta i})\\
        -\frac{d_{22}}{m_{22}}\bar{v}_{yi}+\frac{m_{11}}{m_{22}}k_{xi}\omega_d(t) e_{xi}+B_2(t,e_{pi},e_{\theta i})\\
        -k_{\theta i} \tanh{e_{\theta i}}
    \end{bmatrix},
\end{equation*}
\begin{equation*}
    H(e_i):=\begin{bmatrix}
        1-\cos e_{\theta i} & -\sin e_{\theta i} & e_{yi}\\
        \sin e_{\theta i} & 1-\cos e_{\theta i} & -e_{xi}\\
        \frac{m_{11}}{m_{22}} k_{\theta i}\tanh{e_{\theta i}}& 0 & \frac{m_{11}}{m_{22}}k_{xi}e_{xi}\\
        0 & 0 & 0
    \end{bmatrix},
\end{equation*}
\begin{equation*}
    G(e_i):=\begin{bmatrix}
        1 & 0 & e_{yi}\\
        0 & 1 & -e_{xi}\\
        0 & 0 & 0\\
        0 & 0 & 1
    \end{bmatrix},
\end{equation*}
$\bar{\mathbf{v}}_{yi}:=[0~\bar{v}_{yi}]^\top$, $\Phi_i(t,\tilde{v}):=[0~-m_{11}\tilde{v}_{xi}\omega_i^*(t,e_{\theta i}(t))~0]^\top$, and matrices $A(\cdot)$, $B_1(\cdot)$, $B_2(\cdot)$ are defined in (\ref{eq:25})-(\ref{eq:25c}). The overall closed-loop multi-agent kinematics system has the cascaded form:
\begin{subequations}
\label{eq:54}
  \begin{eqnarray}
    \dot{\xi}_n&=&F(t,\xi_n)+H(e_n)\Delta v_{n-1}+G(e_n)\tilde{v}_n \label{eq:54a}\\
    &\vdots \notag\\
    \dot{\xi}_2&=&F(t,\xi_2)+H(e_2)\Delta v_{1}+G(e_2)\tilde{v}_2\label{eq:54b}\\
    \dot{\xi}_1&=&F(t,\xi_1)+G(e_1)\tilde{v}_1.\label{eq:54c}
  \end{eqnarray}
\end{subequations}

We propose our second main result as follows.

\begin{proposition}[Formation tracking]
    Consider the Euler-Lagrange systems (\ref{eq:46}) in closed loop with the control laws  (\ref{eq:49})-(\ref{eq:mas-underactuated-control}), for each $i\le n$. Define $v_0:=v_d$ and suppose that there exists a positive constant $\bar{\omega}_d$ such that $\max\{|\omega_d|_\infty,|\dot{\omega}_d|_\infty\}\le \bar{\omega}_d$. Furthermore, suppose that $\omega_d$ is persistently exciting in the sense of (\ref{eq:PE}), and that $\bar{\omega}_d$ satisfies (\ref{eq:bound}). Then, there exist constants $\bar{k}_x$ and $\bar{k}_{dx}>0$ such that, if for each $i\le n$,  $k_{xi}\in(0,\bar{k}_x]$, $k_{dxi}\ge \bar{k}_{dx}$, $k_{\theta i} > 0$, and  $k_{d\omega i}\ge 0$, the origin of the closed-loop multi-agent system, $(\xi,\tilde{v})=(0,0)$, where $\xi:=[\xi_1^\top\ldots\xi_n^\top]^\top$ and $\tilde{v}=[\tilde{v}_1^\top\ldots\tilde{v}_n^\top]^\top$, is UGAS (consequently, the formation tracking control problem is solved). 
\end{proposition}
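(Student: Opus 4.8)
The plan is to propagate stability along the chain by induction on the agent index $i$, invoking the cascade theorem \cite[Theorem 2]{panteley1998global} at each stage exactly as in the proof of Proposition \ref{thm:1}. (For a general directed spanning tree the same induction applies along any topological ordering, replacing the leader index $i-1$ by the parent of $i$.) The decisive structural observation is that, by the design of the controls (\ref{eq:49})--(\ref{eq:mas-constraint}), the nominal drift $F(t,\xi_i)$ of every agent's error dynamics is excited by the \emph{common} reference signal $\omega_d$, not by the leader's actual angular velocity $\omega_{i-1}$: writing $\omega_{i-1}=\omega_d+\Delta\omega_{i-1}$ (and likewise for the surge and sway components), all deviations of the leader's velocity from the reference are collected into the interconnection term $H(e_i)\Delta v_{i-1}$, so that $F(t,\xi_i)$ coincides in structure with the trajectory-tracking nominal system $\Sigma_{1\circ}$. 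Consequently Propositions \ref{prop:1} and \ref{prop:2}, and hence the UGAS conclusion of Proposition \ref{thm:1}, apply to each agent under one and the same choice of gains $\bar k_x,\bar k_{dx}$, since all agents share the parameters $m_{11},m_{22},d_{22}$ and the single persistently exciting signal $\omega_d$ of (\ref{eq:PE}); this is what makes uniformity over $i\le n$ possible.

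Base case ($i=1$): since $v_0:=v_d$ one has $\Delta v_0\equiv 0$, so the block (\ref{eq:54c}), (\ref{eq:CL-dynamics}) is identical to the trajectory-tracking closed loop and its origin is UGAS by Proposition \ref{thm:1}. Moreover the proofs of Propositions \ref{prop:1}--\ref{prop:2} furnish a strict Lyapunov function of quadratic type (combining $V_1+\lambda^2V_2$, the $e_\theta$-cascade term, and $\tfrac12\tilde v_1^\top M\tilde v_1$), whence $(\xi_1,\tilde v_1)$ converges at an exponential-plus-bounded-transient rate and its trajectory lies in $\mathcal L_1$. The other ingredient I would establish is a bound on the interconnection: the driving input $\Delta v_{i-1}=v_{i-1}-v_d$ is, via the telescoping identity $v_{i-1}-v_d=\sum_{j=1}^{i-1}(v_j-v_{j-1})$ and the explicit controls, a continuous function of the lower-indexed states $z_{i-1}:=(\xi_1,\tilde v_1,\dots,\xi_{i-1},\tilde v_{i-1})$ that vanishes at the origin, because each increment reads $v_{xj}-v_{x(j-1)}=-k_{xj}e_{xj}+\tilde v_{xj}$, $\omega_j-\omega_{j-1}=-k_{\theta j}\tanh(e_{\theta j})+\tilde\omega_j$, and $v_{yj}-v_{y(j-1)}=\bar v_{yj}+\tilde v_{yj}$. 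In particular $|\Delta v_{i-1}|\le c\sum_{j<i}(|\xi_j|+|\tilde v_j|)$, while $|H(e_i)|\le\alpha_1|\xi_i|+\alpha_2$ because the entries of $H$ are either linear in $e_i$ or bounded ($1-\cos$, $\sin$, $\tanh$); thus $H(e_i)\Delta v_{i-1}$ satisfies the linear-growth condition in the driven state required by \cite[Theorem 2]{panteley1998global}.

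Inductive step: assume $z_{i-1}$ has UGAS origin with a strict quadratic-type Lyapunov function, so that $z_{i-1}\in\mathcal L_1$ and hence $\Delta v_{i-1}\in\mathcal L_1$ along trajectories. I regard the $(\xi_i,\tilde v_i)$ block as the nominal trajectory-tracking system perturbed by the cascade input $H(e_i)\Delta v_{i-1}$. The nominal block is UGAS (Proposition \ref{thm:1}) with a Lyapunov function $\mathcal V_i$ satisfying the polynomial growth bound $|\partial_{\xi_i}\mathcal V_i|\,|\xi_i|\le c\,\mathcal V_i$; the interconnection obeys the growth bound just established; and the driving state is UGAS with $\mathcal L_1$ trajectories. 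The three hypotheses of \cite[Theorem 2]{panteley1998global} therefore hold, and the origin of $z_i:=(z_{i-1},\xi_i,\tilde v_i)$ is UGAS. Forward completeness, needed for the cascade arguments to be well posed, is checked first exactly as in Proposition \ref{thm:1}: with $U=\sum_{i\le n}(|\xi_i|^2+|\tilde v_i|^2)$ and $\nu:=U$ one obtains $\dot\nu\le a\nu+b$ along solutions, so trajectories exist for all $t\ge t_\circ$. Iterating up to $i=n$ yields UGAS of $z_n=(\xi,\tilde v)$, which is the claim.

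The step I expect to be the main obstacle is the integrability requirement of the cascade theorem: mere UGAS of the driving block does not suffice, and one must invoke the \emph{strict, quadratic} Lyapunov functions produced in Propositions \ref{prop:1}--\ref{prop:2} to guarantee $\Delta v_{i-1}\in\mathcal L_1$ at every level. The secondary delicate point is uniformity---that a single pair $(\bar k_x,\bar k_{dx})$ serves all $n$ agents---which hinges precisely on the design feature noted above, namely that every nominal subsystem is excited by the common reference $\omega_d$ rather than by the (a priori only asymptotically convergent) leader velocities; this is what prevents the gain bounds from degrading as one climbs the tree.
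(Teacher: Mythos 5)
Your proposal is correct and follows essentially the same route as the paper's proof: induction along the chain, treating each level as a nested cascade in which the nominal drift $F(t,\xi_i)$ coincides with the trajectory-tracking nominal system (excited by the common reference $\omega_d$), bounding the interconnection $H(e_i)\Delta v_{i-1}$ through the same control-design identities, and invoking \cite[Theorem 2]{panteley1998global} together with Propositions \ref{prop:1}--\ref{prop:2}, with the integrability of the driving state secured by exponential convergence near the origin. The only cosmetic difference is that the paper carries ``UGAS plus exponential convergence of $\zeta_{i-1}$ near the origin'' as the induction hypothesis, whereas you phrase it as the existence of a strict quadratic-type Lyapunov function for the composite lower block (a slightly stronger statement than what the cascade theorem literally returns), but the property actually used---local exponential decay, hence $\mathcal{L}_1$ trajectories uniformly on compacta---is the same in both arguments.
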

\begin{proof} The proof is established by induction. Let $i=1$. Then, the closed-loop system is composed of (\ref{eq:54c}) and (\ref{eq:CL-dynamics}) and has a nested cascaded form. It follows from Proposition \ref{prop:1} that the origin of (\ref{eq:CL-dynamics}) with $i=1$ is UGES, so $\tilde{v}_1\to 0$ exponentially. Then, on the set $\{\tilde{v}_1=0\}$, the nominal system $\dot \xi_1=F(t,\xi_1)$ corresponds to (\ref{eq:25})-(\ref{eq:25c}), which also has a cascaded structure. Following Proposition \ref{prop:2} and recursively using the cascades argument, we conclude that the origin $(\xi_1,\tilde{v}_1)=(0,0)$ is UGAS. Furthermore, it follows from the proof of Proposition \ref{prop:2} that $\xi_1\to 0$ exponentially near the origin due to the quadratic Lyapunov function $\mathcal{V}$. By virtue of the control design, $\tilde{v}_1\to 0$ and $\xi_1\to 0$ exponentially near the origin implies that $\Delta v_1\to 0$ exponentially near the origin. Now, let us rewrite the closed-loop system (\ref{eq:54c}) and (\ref{eq:CL-dynamics}) with $i=1$ as $\dot{\zeta}_1=F_{{\rm cl},1}(t,\zeta_1)$, where $\zeta_1:=[\xi_1^\top~\tilde{v}_1^\top]^\top$.

Next, let $i=2$ and consider the closed-loop system 
\begin{subequations} \label{eq:55}
  \begin{eqnarray}
    \dot{\xi}_2&=&F(t,\xi_2)+H(e_2)\Delta v_{1}+G(e_2)\tilde{v}_2 \label{eq:55a}\\
    \dot{\zeta}_1&=&F_{{\rm cl},1}(t,\zeta_1) \label{eq:55b}\\
    M\dot{\tilde{v}}_2&=&-C(v_2)\tilde{v}_2-[D+K_{d2}]\tilde{v}_2+\Phi_2(t,\tilde{v}_2).\qquad\label{eq:55c}
  \end{eqnarray}
\end{subequations}
Note that the system (\ref{eq:55}) also has a nested cascaded form. From  Proposition \ref{prop:1}, it follows that the origin of (\ref{eq:55c}) is UGES, so $\tilde{v}_2\to 0$ exponentially. Also, the origin of (\ref{eq:55b}) is UGAS, and $\zeta_1\to0$ exponentially near the origin. Therefore, to analyze the stability of the origin for (\ref{eq:55}), it is enough to establish UGAS for the system $\dot{\xi}_2=F(t,\xi_2)$ and invoke again \cite[Theorem 2]{panteley1998global}, because the matrices $H(e_2)$ and $G(e_2)$ satisfy the condition of linear growth in  $|e_2|$. To that end, we note that the system $\dot{\xi}_2=F(t,\xi_2)$ also has a cascaded structure, as in equations (\ref{eq:25})-(\ref{eq:25c}). Therefore, following Proposition \ref{prop:2} again, we conclude that $(\xi_2,\zeta_1,\tilde{v}_2)=(0,0,0)$ is UGAS for (\ref{eq:55}). Next, we rewrite the closed-loop system (\ref{eq:55}) as $\dot{\zeta}_2=F_{{\rm cl},2}(t,\zeta_2)$, where $\zeta_2:=[\xi_2^\top~\zeta_1^\top~\tilde{v}_2^\top]^\top$. For $i=3$, the closed-loop system has exactly the same form as \eqref{eq:55}, with the index `${}_2$' replaced by `${}_3$' and 
all of the previous arguments hold. The result follows by induction.
\end{proof}

\section{Simulation Results}\label{sec:simulation}

We consider a group of four surface vessels in the linear communication topology following a virtual leader. That is, each surface vessel has only one follower except for agent-4. Each surface vessel is considered to be modeled by 
\begin{subequations}\label{eq:sim}
\color{myred}{
  \begin{eqnarray}
    &\dot{q}_i=J(q_i)v_i& \label{eq:sim-a}\\
    &M\dot{v}_i+C(v_i)v_i+Dv_i=G\tau_i + w_i(t),& \label{eq:sim-b}
  \end{eqnarray}
  }
\end{subequations}
\hspace{-0.22cm}with \textit{nominal} parameters $m_{11}=1.012$, $m_{22}=1.982$, $m_{33}=0.354$, $d_{11}=3.436$, $d_{22}=18.99$, and $d_{33}=0.864$, where the parameters are from a laboratory-size surface vessel. All parameters are given in SI units. \myred{In \eqref{eq:sim}, \( w_i(t) \in \mathbb{R}^3 \) represents external disturbances and is modeled as white noise.} The desired geometric formation shape of the four surface vessels is a diamond configuration, {\it i.e.}, $(d_{x1},d_{y1})=(0,0)$, $(d_{x2},d_{y2})=(-2,-2)$, $(d_{x3},d_{y3})=(4,0)$, and $(d_{x4},d_{y4})=(-2,-2)$. The reference trajectory is generated by the virtual leader with input $(\tau_{x0},\tau_{\omega0})=(2,0.3\sin({0.3t}))$ and with zero initial conditions. The initial conditions of the follower agents are randomly generated as $q_1(0)=[1.46,0.45,1.33]^\top$, $q_2(0)=[-3.45,2.25,1.02]^\top$, $q_3(0)=[-5.63,-4.94,-0.18]^\top$, and $q_4(0)=[-1.17,-5.23,-0.78]^\top$. \myred{Based on the reference trajectory, the upper bound of the reference angular velocity is given by \(\bar{\omega}_d = 0.355\). The PE parameters are \(\mu = 1.32\) and \(T = 20.94\); so Inequality \eqref{eq:bound} holds.} The control gains are set to $k_{xi}=k_{\theta i}=0.2$, and $K_{di}=\operatorname{diag}\{10,0,10\}$. \myred{With the chosen control gains, one can verify that the right-hand side of \eqref{eq:above} is less than \(0.982\), ensuring that the small-gain condition holds.} The simulation results \myred{with $w_i(t)\equiv 0$} are shown in Figs. \ref{fig:path} and \ref{fig:error} (dash-dotted lines), which illustrate the physical paths described by the swarm and the formation tracking error trajectories for each leader-follower pair. As shown in Fig. \ref{fig:error} (dash-dotted lines), the formation tracking errors converge to zero asymptotically.

\color{myred}To illustrate the local input-to-state stability property (robustness), guaranteed by the UGAS property achieved under the proposed approach statistically, \color{black} Fig. \ref{fig:error} provides the evolution of the formation tracking errors with \textit{\color{myred}parameter uncertainties} and \textit{\color{myred}input disturbances} over 100 runs with random initial states, {\it i.e.}, $(x_i,y_i,v_i)\sim \mathcal{N}(0,5^2)$ and $\theta_i\sim \mathcal{U}(-\pi,\pi)$. Each model parameter of the four follower vessels is randomly generated in each run by multiplying a random variable factor $\eta\sim\mathcal{U}(0.5,1.5)$ to the nominal parameter. \myred{In other words, up to $\pm 50\%$ parameter uncertainties are considered in the simulations. Moreover, white noise disturbances $w_i(t)$ are added to the surface vessel dynamics with noise power 0.1 and sample time 0.01.} The simulation results of the 100 runs are shown in Fig. \ref{fig:error}. The results indicate that all formation tracking error trajectories converge to and remain within a small neighborhood around zero, \myred{even under parameter uncertainties and disturbances}.

\begin{figure}[t]
    \centering
    \includegraphics[scale=0.34]{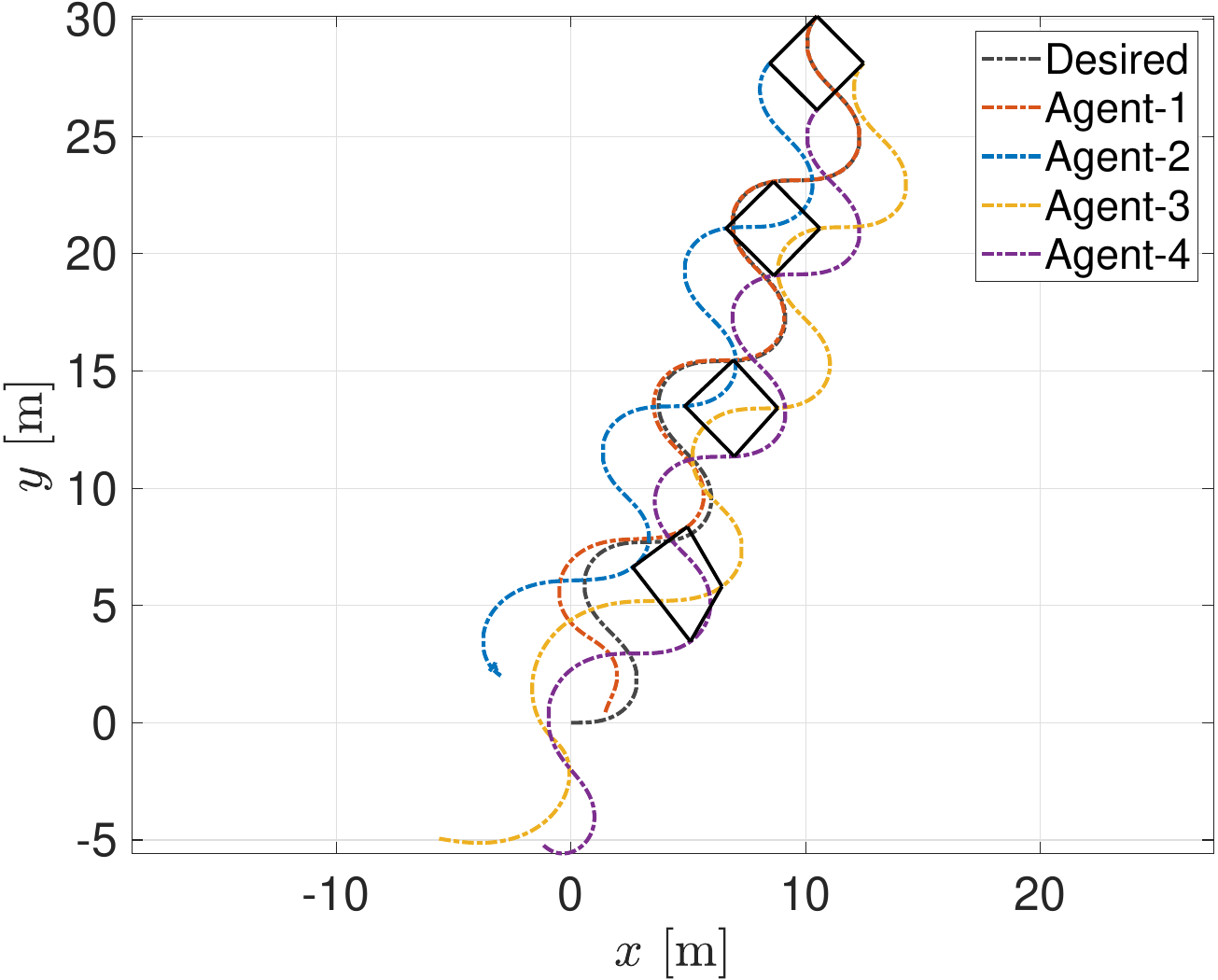}
    \caption{\color{myred}Illustration of the paths in formation tracking.}
    \label{fig:path}
\end{figure}

\begin{figure}[t]
    \centering
    \includegraphics[scale=0.34]{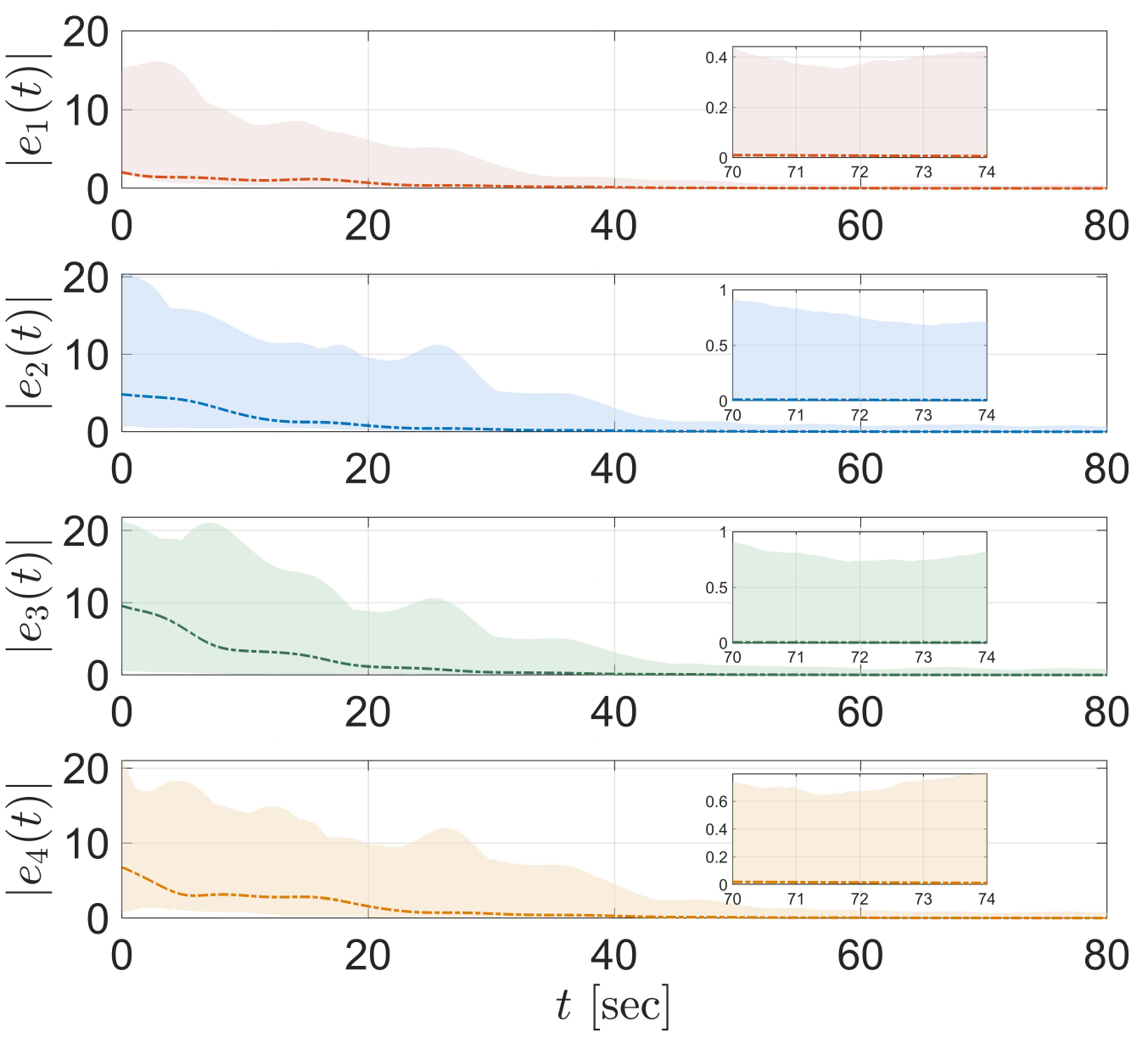}
    \caption{\color{myred}Convergence of the relative errors (in norm) for each leader-follower pair. The shadowed colored regions represent the envelopes containing the trajectories of 100 simulation tests with randomly chosen initial conditions and under the influence of \textit{parameter uncertainties} and \textit{external disturbances}.}
    \label{fig:error}
\end{figure}

\section{Conclusion}\label{sec:conclusion}

We present a simple formation tracking controller for underactuated surface vessels equipped with only two propellers. The control laws at the kinematics level are \textit{linear} and \textit{saturated linear} and exhibit a low-gain feature. At the kinetics level, our approach extends the \textit{classical PD+ controller} to underactuated vehicle systems, ensuring uniform global asymptotic stability for the closed-loop system. {\color{myred}Future research will focus on extending the presented approach to underactuated surface vessels with more complex models and under relaxed assumptions, such as switching topologies, communication delays, and heterogeneous networks.}

\section*{Appendix: $\mathcal{L}_2$-Gain and the Small-Gain Theorem}

 We say that the $\mathcal{L}_2$-gain of the dynamical system 
\begin{equation*}
    \Sigma:\left\{\begin{aligned}
        \dot{x}&=f(x,u),\quad x\in\mathbb{R}^n,u\in\mathbb{R}^m\\
        y&=h(x,u),\quad y\in\mathbb{R}^p
    \end{aligned}\right.
\end{equation*}
is less than or equal to $\gamma$ if it is dissipative with respect to the supply rate $s(u,y):=\frac{1}{2}\gamma^2|u|^2-\frac{1}{2}|y|^2$. That is, there exists a continuously differentiable storage function $V:\mathbb{R}^n\to \mathbb{R}_{\ge 0}$ such that 
\begin{equation*}
    \dot{V}:=\frac{\partial V}{\partial x}(x)f(x,u)\le \frac{\gamma^2}{2}|u|^2-\frac{1}{2}|h(x,u)|^2,\quad \forall~x,u.
\end{equation*}
The $\mathcal{L}_2$-gain of $\Sigma$ is defined as 
\begin{equation*}
    \gamma(\Sigma):=\inf\{\gamma:\Sigma~\text{has}~\mathcal{L}_2\text{-gain}~\le \gamma\}.
\end{equation*}

Consider the feedback interconnection of two systems
\begin{equation*}
    \Sigma_i:\left\{\begin{aligned}
        \dot{x}_i&=f_i(x_i,u_i)\\
        y_i&=h_i(x_i,u_i)
    \end{aligned}\right.,\quad i=1,2,
\end{equation*}
where $u_1=-y_2$ and $u_2=y_1$. Denote the storage function of $\Sigma_1$ and $\Sigma_2$ by $V_1$ and $V_2$, respectively.

\begin{theorem}[Small-gain theorem \cite{schaft2017l2}]\label{thm:small-gain}
    Suppose that the systems $\Sigma_1$ and $\Sigma_2$ have $\mathcal{L}_2$-gains such that $\gamma(\Sigma_1)\le \gamma_1$ and $\gamma(\Sigma_2)\le \gamma_2$, with $\gamma_1\cdot \gamma_2<1$. Suppose that two functions $V_1$ and $V_2$ satisfying
    \begin{equation*}
    \dot{V}_i:=\frac{\partial V_i}{\partial x_i}(x_i)f_i(x_i,u_i)\le \frac{\gamma_i^2}{2}|u_i|^2-\frac{1}{2}|y_i|^2,\quad i=1,2,
\end{equation*}
are proper and have global minima at $(x_1,x_2)=(0,0)$, and suppose that $\Sigma_1$ and $\Sigma_2$ are zero-state detectable. Then the origin of the feedback interconnection $(x_1,x_2)=(0,0)$ is globally asymptotically stable.
\end{theorem}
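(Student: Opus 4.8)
The plan is to prove the theorem with a single composite storage function combined with LaSalle's invariance principle, using the small-gain condition to force a negative-semidefinite dissipation rate and zero-state detectability to upgrade this to full-state attractivity. First I would substitute the interconnection constraints $u_1=-y_2$ and $u_2=y_1$ into the two given dissipation inequalities, so that along the closed-loop trajectories
\begin{align*}
  \dot V_1 &\le \tfrac12\gamma_1^2|y_2|^2-\tfrac12|y_1|^2,\\
  \dot V_2 &\le \tfrac12\gamma_2^2|y_1|^2-\tfrac12|y_2|^2.
\end{align*}
I would then introduce the candidate $V:=V_1+\lambda V_2$ with a multiplier $\lambda>0$ to be fixed, whence
\begin{equation*}
  \dot V \le \tfrac12(\lambda\gamma_2^2-1)|y_1|^2+\tfrac12(\gamma_1^2-\lambda)|y_2|^2.
\end{equation*}
The key algebraic step is that both coefficients can be made strictly negative precisely when $\gamma_1^2<\lambda<\gamma_2^{-2}$, and this interval is nonempty exactly because the small-gain condition $\gamma_1\gamma_2<1$ holds. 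Fixing any such $\lambda$ yields $\dot V\le -\varepsilon(|y_1|^2+|y_2|^2)$ for some $\varepsilon>0$.

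Next I would exploit the structural hypotheses on $V_1$ and $V_2$. Since each is proper and attains its global minimum at the origin, the composite $V$ (after normalizing $V_i(0)=0$) is nonnegative, radially unbounded, and satisfies $\dot V\le0$; hence every sublevel set is compact and forward invariant, which gives boundedness of all solutions and forward completeness of the interconnection. Because the interconnection $\dot x_i=f_i(x_i,u_i)$ is \emph{autonomous}, I would then invoke LaSalle's invariance principle: every trajectory converges to the largest invariant set contained in $\{\dot V=0\}\subseteq\{y_1=0,\ y_2=0\}$.

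Finally I would close the argument using zero-state detectability, which is the decisive ingredient. On the invariant set one has $y_1\equiv y_2\equiv0$, so the feedback relations force $u_1=-y_2\equiv0$ and $u_2=y_1\equiv0$; thus each subsystem evolves with identically zero input and output, and zero-state detectability forces $x_i\to0$. Consequently the largest invariant set reduces to the origin, establishing global attractivity, which together with the Lyapunov stability inherited from the proper, minimized function $V$ yields global asymptotic stability. The main obstacle I anticipate is precisely this last passage: the dissipation estimate controls only the \emph{outputs}, so $\dot V\le-\varepsilon|y|^2$ by itself says nothing about the internal states, and it is only the invariance argument together with zero-state detectability that excludes nontrivial invariant motions on the output-zero manifold and thereby delivers convergence of the full state.
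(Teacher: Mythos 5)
Your proof is correct and follows essentially the same route as the result the paper invokes: the theorem is recalled from van der Schaft without proof, but its standard proof is exactly your weighted storage function $V_1+\lambda V_2$ with $\gamma_1^2<\lambda<\gamma_2^{-2}$ (the very construction the paper reuses after Proposition~2, where it forms $\mathcal{V}:=V_1+\lambda^2 V_2$ with $\gamma_1<\lambda<1/\gamma_2$), combined with LaSalle's invariance principle and zero-state detectability. One minor phrasing refinement: detectability shows that trajectories \emph{inside} the output-zero invariant set converge to the origin, not that the largest invariant set literally \emph{equals} $\{0\}$, so the final step is cleanest via the $\omega$-limit set (compact, invariant, contained in $\{y_1=y_2=0\}$, hence containing the origin, with $V$ constant there), a routine adjustment that does not change your argument.
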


\vskip 10pt
\section*{References}
\vspace{-0.5cm}
\bibliographystyle{ieeetr} 
\bibliography{v2_main}   

\end{document}